\documentclass[a4paper,amsfonts,twocolumn,superscriptaddress,nofootinbib,longbibliography,accepted=2020-12-18]{quantumarticle}
\pdfoutput=1

\usepackage[numbers,sort&compress]{natbib}

\usepackage{epsfig,amsmath,amssymb,bm,epsf,graphicx,psfrag,bbm}
\usepackage{color, comment, verbatim}
\usepackage{ragged2e}
\usepackage{subfigure}
\usepackage{enumerate}
\usepackage{chngcntr}

\usepackage[numbers,sort&compress]{natbib}
\usepackage[colorlinks,breaklinks]{hyperref}
\usepackage{dsfont}
\usepackage{float}
\usepackage{tabularx}
\usepackage{graphicx}
\usepackage{amsbsy}
\usepackage{amsthm}

\usepackage{dsfont} 

\usepackage{tikz}
\usepackage[toc,page]{appendix}
\usepackage{titlesec}
\usepackage{xcolor}

\makeatletter
\newcommand\org@hypertarget{}
\let\org@hypertarget\hypertarget
\renewcommand\hypertarget[2]{%
  \Hy@raisedlink{\org@hypertarget{#1}{}}#2%
  }
\makeatother
\hypersetup{
	bookmarksnumbered,
	pdfstartview={FitH},
	citecolor={darkgreen},
	linkcolor={darkred},
	urlcolor={darkblue},
	pdfpagemode={UseOutlines}}
\definecolor{darkgreen}{RGB}{50,190,50}
\definecolor{darkblue}{RGB}{0,0,190}
\definecolor{darkred}{RGB}{238,0,0}

\usepackage[framemethod=tikz]{mdframed}
\definecolor{mycolor}{rgb}{0.122, 0.435, 0.698}
\definecolor{mycolor2}{RGB}{112, 48, 160}
\newmdenv[innerlinewidth=0.5pt, roundcorner=4pt,linecolor=mycolor,innerleftmargin=6pt,
innerrightmargin=6pt,innertopmargin=6pt,innerbottommargin=6pt]{mybox}

\usepackage{paracol}
\usepackage{tcolorbox}
\tcbset{colback=mycolor2!5,colframe=mycolor2,
fonttitle=\bfseries, float=htb}
\newtcolorbox[blend into=figures]{boxfigure}[3][]
{ float*=ht,width=\textwidth,lower separated=false, center upper,
title={#2},label= fig:#3,#1}
\newtcolorbox[blend into=figures]{smallboxfigure}[3][]
{float=ht,lower separated=false, blend before title=colon hang,
title={#2}, label= fig:#3 ,#1}
\newtcolorbox{smallbox}[3][]
{float=ht,lower separated=false, blend before title=colon hang,
title={#2}, label= fig:#3 ,#1}
\newtcolorbox[blend into=tables]{smallboxtable}[3][]
{float=tb,lower separated=false, blend before title=colon hang,
title={#2}, label= table:#3 ,#1}
\newtcolorbox[blend into=tables]{bigboxtable}[3][]
{float*=t,lower separated=false, blend before title=colon hang, width = 2\linewidth,
title={#2}, label= table:#3 ,#1}
\newcolumntype{Z}{|>{\centering\arraybackslash}X}
\usepackage{enumerate}
\hypersetup{
	bookmarksnumbered,
	pdfstartview={FitH},
	citecolor={darkgreen},
	linkcolor={darkred},
	urlcolor={darkblue},
	pdfpagemode={UseOutlines}}
\definecolor{darkgreen}{RGB}{50,190,50}
\definecolor{darkblue}{RGB}{0,0,190}
\definecolor{darkred}{RGB}{238,0,0}
\usepackage{soul}

\newcommand{\be}{\begin{equation}}
\newcommand{\ee}{\end{equation}}
\newcommand{\ben}{\begin{equation*}}
\newcommand{\een}{\end{equation*}}
\newcommand{\bea}{\begin{eqnarray}}
\newcommand{\eea}{\end{eqnarray}}
\newcommand{\pri}{^{\prime}}
\newcommand{\tr}{\textnormal{Tr}}

\newcommand{\half}{\mbox{$\textstyle \frac{1}{2}$}}

\newcommand{\ket}[1]{\ensuremath{\left|\right.\!{#1}\!\left.\right\rangle}}

\newcommand{\bra}[1]{\ensuremath{\left\langle\right.\!{#1}\!\left.\right|}}

\newcommand{\ketbra}[2]{\ensuremath{|{#1}\rangle\langle{#2}|}}
\newcommand{\brakket}[3]{\ensuremath{\langle{#1}|{#2}|{#3}\rangle}}
\newcommand{\scpr}[2]{\ensuremath{\left\langle\right.\hspace*{-1pt} #1 \hspace*{-1pt}\left|\right.\hspace*{-1pt} #2 \hspace*{-1pt}\left.\right\rangle}}

\newcommand{\nr}{\ensuremath{\hspace*{0.5pt}}}

\newcommand{\suptiny}[3]{\ensuremath{^{\hspace{#1 pt}\protect\raisebox{#2 pt}{\tiny{$ #3$}}}}}
\newcommand{\SA}{\ensuremath{_{\hspace{-1pt}\protect\raisebox{0pt}{\tiny{$A$}}}}}

\newcommand{\SB}{\ensuremath{_{\hspace{-1pt}\protect\raisebox{0pt}{\tiny{$B$}}}}}

\newcommand{\SAB}{\ensuremath{_{\hspace{-1pt}\protect\raisebox{0pt}{\tiny{$A\hspace*{-0.5pt}B$}}}}}

\DeclareMathOperator{\sinc}{sinc}

\newcommand{\djj}{d\kern-0.4em\char"16\kern-0.1em}

\newtheorem{lemma}{Lemma}

\newcolumntype{s}{>{\hsize=.6\hsize}X}

\newcommand{\nh}[1]{\textcolor{black}{#1}}

\newcommand{\nf}[1]{\textcolor{black}{#1}}

\begin{document}

\title{High-Dimensional Pixel Entanglement: Efficient Generation and Certification}

\author{Natalia Herrera Valencia}
\email[Email address: ]{nah2@hw.ac.uk}
    \affiliation{Institute of Photonics and Quantum Sciences, Heriot-Watt University, Edinburgh, UK}
    \orcid{0000-0003-1468-8953}
    
\author{Vatshal Srivastav}
    \affiliation{Institute of Photonics and Quantum Sciences, Heriot-Watt University, Edinburgh, UK}
    
\author{Matej Pivoluska}
    \affiliation{Institute of Physics, Slovak Academy of Sciences, Bratislava, Slovakia}
    \affiliation{Institute of Computer Science, Masaryk University, Brno, Czech Republic}  
    \orcid{0000-0001-6876-2085}

\author{Marcus Huber}
    \affiliation{Institute for Quantum Optics and Quantum Information - IQOQI Vienna, Austrian Academy of Sciences, Vienna, Austria}
    \affiliation{Vienna Center for Quantum Science and Technology, Atominstitut, TU Wien,  1020 Vienna, Austria}%
    \orcid{0000-0003-1985-4623}

\author{Nicolai Friis}
    \affiliation{Institute for Quantum Optics and Quantum Information - IQOQI Vienna, Austrian Academy of Sciences, Vienna, Austria}
    \orcid{0000-0003-1950-8640}
    
\author{Will McCutcheon}
    \affiliation{Institute of Photonics and Quantum Sciences, Heriot-Watt University, Edinburgh, UK}
    \orcid{0000-0003-0344-6385}
    
\author{Mehul Malik}
    \email[Email address: ]{m.malik@hw.ac.uk}
    \affiliation{Institute of Photonics and Quantum Sciences, Heriot-Watt University, Edinburgh, UK}
    \affiliation{Institute for Quantum Optics and Quantum Information - IQOQI Vienna, Austrian Academy of Sciences, Vienna, Austria}
    \orcid{0000-0001-8395-160X}

\begin{abstract}
Photons offer the potential to carry large amounts of information in
their spectral, spatial, and polarisation degrees of freedom. While state-of-the-art classical communication systems routinely aim to maximize this information-carrying capacity via wavelength and spatial-mode division multiplexing, quantum systems based on multi-mode entanglement usually suffer from low state quality, long measurement times, and limited encoding capacity. At the same time, entanglement certification methods often rely on assumptions that compromise security. Here we show the certification of photonic high-dimensional entanglement in the transverse position-momentum degree-of-freedom with a record quality, measurement speed, and entanglement dimensionality, without making any assumptions about the state or channels. Using a tailored macro-pixel basis, precise spatial-mode measurements, and a modified entanglement witness, we demonstrate state fidelities of up to 94.4\% in a 19-dimensional state-space, entanglement in up to 55 local dimensions, and an entanglement-of-formation of up to 4 ebits. Furthermore, our measurement times show an improvement of more than two orders of magnitude over previous state-of-the-art demonstrations. Our results pave the way for noise-robust quantum networks that saturate the information-carrying capacity of single photons.
\end{abstract}
\maketitle

\section{Introduction}
\label{sec:Intro}
Quantum entanglement plays a pivotal role in the development of quantum technologies, resulting in revolutionary concepts in quantum communication such as superdense coding~\cite{Bennett2002} and device-independent security~\cite{Vazirani:2014hu,Acin:2007db}, as well as enabling fundamental tests of the very nature of reality~\cite{Giustina:2015fc,Shalm:2015wu,Proietti:2019bv}. While many initial demonstrations have relied on entanglement between qubits, recent advances in technology and theory now allow us to fully exploit high-dimensional quantum systems. 
In particular, the large dimensionality 
offered by photonic quantum systems has provided the means for quantum communication with record capacities~\cite{Islam:2017hs,Mirhosseini:2015fy}, noise-resistant entanglement distribution~\cite{Ecker-Huber2019,Zhu:2019tb}, robust loophole-free tests of local realism~\cite{Salavrakos:2017ek,Vertesi:2010bq}, and scalable methods for quantum computation~\cite{Gokhale2019}. 

In order to make full use of the potential of high-dimensional entanglement, it is of key importance to achieve the certification of entanglement with as few measurements as possible. The characterisation of a bipartite state with local dimension $d$ through full state tomography requires $\mathcal{O}(d^4)$ single-outcome projective measurements~\cite{Friis:2019hg,Agnew2011}, making this task extremely impractical in high dimensions. More efficient tools for quantifying high-dimensional entanglement involve entanglement witnesses that use semi-definite programming~\cite{Martin2017}, matrix completion techniques~\cite{Tiranov:2016wa}, or compressed sensing~\cite{Schneeloch:2019uw}. In this context, it is crucial to certify entanglement without compromising the security and validity of the applications by introducing assumptions on the state, e.g., purity of the generated state~\cite{Kues:2017db} or conservation of quantities~\cite{Krenn:2014jy}. Recent 
work has shown how measurements in mutually unbiased bases (MUBs) allow the efficient certification of high-dimensional entanglement~\cite{Erker:2017cb,Bavaresco:2018gw}, without any such assumptions on the underlying quantum state. Here we employ and extend this method to improve upon the quality and speed of high-dimensional entanglement certification.

High-dimensional entanglement has been demonstrated in multiple photonic platforms, with encodings in the orbital angular momentum (OAM)~\cite{Dada:2011dn,Bavaresco:2018gw}, time-frequency~\cite{Martin2017}, path~\cite{Wang:2018gh}, and transverse position-momentum degrees of freedom (DoF)~\cite{Schneeloch:2019uw}. While time-frequency encoding offers the potential of accessing spaces with very large dimensions, the difficulty of measuring coherent superpositions of multiple time-bins hinders the scalability of the technique, and in turn necessitates certification methods that require unwanted assumptions on the reconstruction of the state in question to reach their full potential~\cite{Martin2017}. Path-encoding in integrated photonic circuits offers yet another promising avenue for realising high-dimensional entanglement~\cite{Wang:2018gh}. However, the precise fabrication and control of $\frac{d(d-1)}{2}$ Mach-Zehnder interferometers required for universal operations in $d$ dimensions poses significant practical challenges as the dimension is increased~\cite{Flamini:2017fy}.

Meanwhile, techniques for the creation, manipulation and detection of entanglement in photonic OAM bases have seen rapid progress in recent years~\cite{Krenn2017,Erhard:2018iua}, where devices such as spatial light modulators (SLMs) enable generalized measurements of complex amplitude modes. However, such measurements necessarily suffer from loss~\cite{Arrizon:2007wl} and have limited quality~\cite{Bouchard:2018hr, Qassim:2014fp}, resulting in long measurement times and reduced entanglement quality in large dimensions~\cite{Bavaresco:2018gw}. An alternative choice of basis is the discretised transverse position-momentum, or ``pixel'' basis~\cite{OSullivan:2005vh}, where the lack of efficient single-photon detector arrays necessitates scanning through localised position or momentum modes, or subtracting a large noise background~\cite{Edgar:2012bl}. Such measurements are also subject to extreme loss, and as a result require long measurement times and strong assumptions on detector noise, such as background or accidental count subtraction.

In this work, we report on significant progress towards overcoming the challenges of scalability, speed, and quality in the characterisation of high-dimensional entanglement with a strategy that combines three distinct improvements in the generation and measurement of spatially entangled modes. Working in the discretized transverse position-momentum DoF, we first tailor our spatial-mode basis by adapting it to the characteristics of the two-photon state generated and measured in our experimental setup. Second, we implement a recently developed spatial-mode measurement technique~\cite{Bouchard:2018hr} that ensures precise projective measurements in any mode basis of our choice. Third, we generalise a recently developed entanglement dimensionality witness~\cite{Bavaresco:2018gw} to certify high-dimensional entanglement using any two high-dimensional MUBs. Crucially, this allows us to bypass lossy localised mode measurements in the transverse position or momentum bases. 

The combination of these improvements in basis optimisation, spatial-mode measurement, and certification tools allow us to certify high-dimensional entanglement with a record quality, speed, and dimension, reaching state fidelities up to 98\%, certified entanglement dimensionality up to 55 (in local dimension 97), and an entanglement-of-formation of up to 4 ebits. Below we introduce our theoretical framework and elaborate on our improved techniques.


\section{Theory}

A two-photon state entangled in transverse position-momentum and produced via the process of spontaneous parametric down-conversion (SPDC) is characterised by its \emph{joint-transverse-momentum-amplitude} or JTMA (Fig.~\ref{fig:JTMA}~a), which is well approximated by the function~\cite{Schneeloch:2016ch,Miatto:2012ck}
\begin{align}
\label{eq:JTMAformula}
F(\vec k_s, \vec k_i)= \exp\Bigl\{-\half \frac{|\vec k_s+\vec k_i|^2}{\sigma_P^2}\Bigr\} 
\sinc\Bigl\{\frac{1}{\sigma_S^2}|\vec k_s - \vec k_i|^2\Bigr\},
\end{align}
where $\vec k_s$ and $\vec k_i$ are the transverse momentum components of the signal and idler photons. The parameters $\sigma_P$ and $\sigma_S$ are the widths of the minor and major axis of the JTMA, which are dependent on the pump transverse momentum bandwidth and the crystal characteristics, respectively. In effect, $\sigma_P$ indicates the degree of momentum correlation, while $\sigma_S$ is a measure of the number of correlated modes in the state.

Experimental studies of spatial-mode entanglement usually involve projective measurements made with a combination of a hologram and single-mode fiber (SMF), which together act as spatial-mode filter. The addition of collection optics for coupling to an SMF results in the \textit{collected} bi-photon JTMA of
\be
\label{eq:collectedJTMA}
G(\vec k_s, \vec k_i)=
\exp\Bigl\{-\half \frac{|\vec k_s|^2}{\sigma_C^2}\Bigr\}
\exp\Bigl\{-\half \frac{|\vec k_i|^2}{\sigma_C^2}\Bigr\}
F(\vec k_s, \vec k_i),
\ee
where $\sigma_C$ denotes the collection bandwidth, that is, the size of the back-projected detection mode. The effect of the collection on the correlations is illustrated in Fig.~\ref{fig:JTMA}, where the Gaussian functions attributed to the SMF modes decrease the probability of detecting higher-order modes associated with the edges of the JTMA. Finally, we include the hologram functions $\Phi_s(\vec k_s)$ and $\Phi_i(\vec k_i)$ used for performing projective measurements on the signal and idler photons, respectively, which leads to the two-photon coincidence probability
\be
\Pr(\Phi_s,\Phi_i)=\biggl| \int d^2\vec k_s \int d^2 \vec k_i \Phi_s(\vec k_s) \Phi_i(\vec k_i) G(\vec k_s, \vec k_i)\biggr|^2.
\ee

Maximizing the quality and dimensionality of an experimentally generated entangled state involves a complex interplay of optimizing the generation and measurement parameters introduced above. Recent work in this direction has focused on pump shaping~\cite{Kovlakov:2018gz,Liu2018} and entanglement witnesses that adapt to the non-maximally entangled nature of the state~\cite{Bavaresco:2018gw}. Below, we introduce a new approach to engineering high-quality spatial-mode entanglement based on three distinct improvements: tailoring the spatial-mode measurement basis, precise two-photon measurements via intensity-flattening, and a modified witness for high-dimensional entanglement.


\begin{figure*}[ht!]
\centering\includegraphics[width=0.7\textwidth]{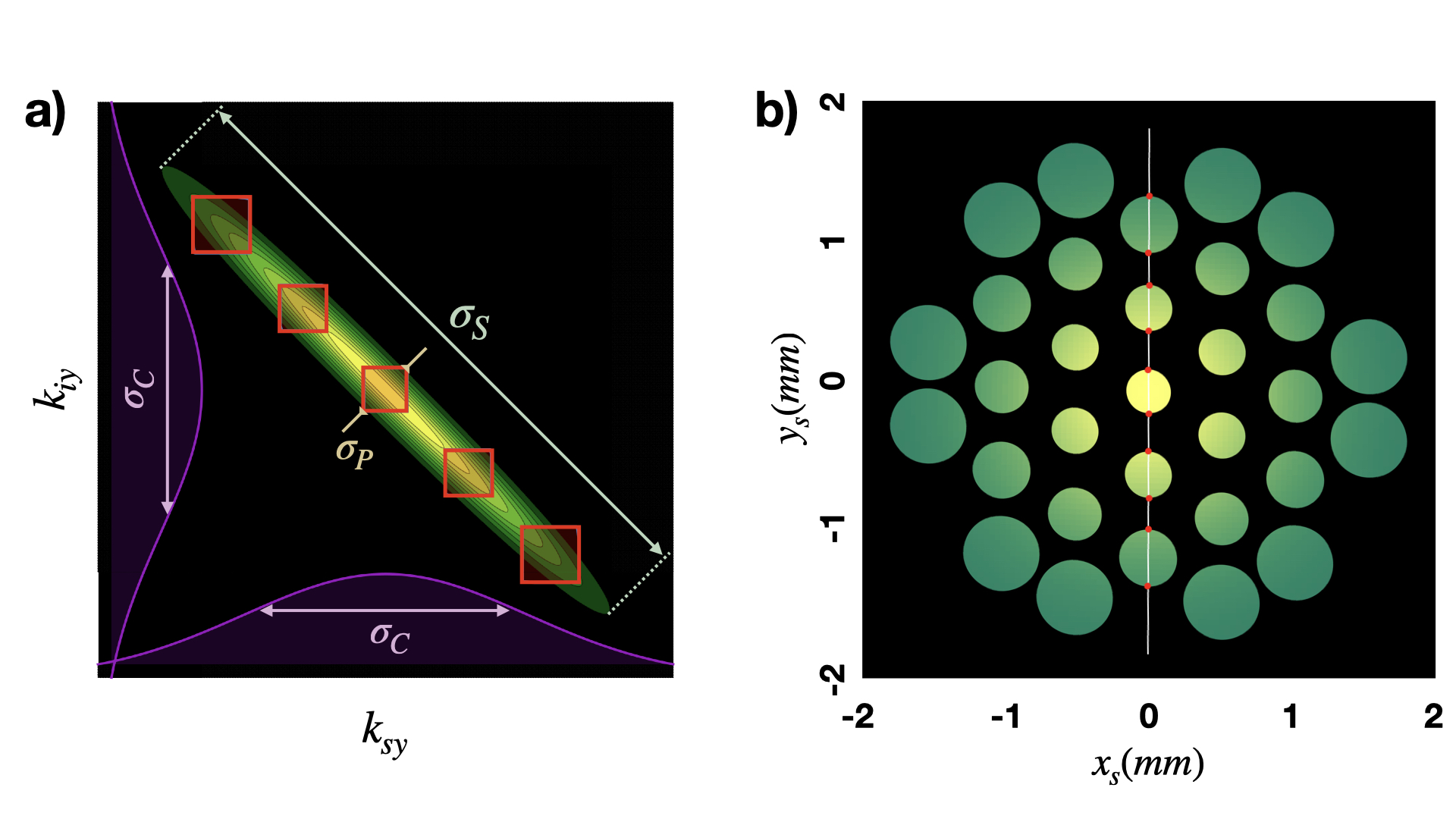}
\vspace*{-4mm}
\caption{\textbf{Joint-transverse-momentum-amplitude (JTMA) and pixel basis tailoring.} a) A contour plot (green/yellow) depicting the absolute value of a 2-d slice $F\left( (0,k_{sy}),(0,k_{iy})\right)$ of the JTMA corresponding to $k_{sx}=k_{ix}=0$. The Gaussian distributions (purple) on the $k_{sy}$ and $k_{iy}$ axes are the collection mode envelopes $\exp\{-\half \left(\frac{k_{sy}}{\sigma_C}\right)^2\}$. The square regions indicate values of $k_{sy}$ and $k_{iy}$ used for generating an optimised pixel basis mask. As such, integrals over the square regions are closely related to the coincidence rate for projections on to the given pair of pixels (up to the additional 2-d integral over $k_{sx}$ and $k_{ix}$), demonstrating the necessity to increase the pixel radii for those positioned over the less intense regions of the JTMA. b) An example 31-dimensional tailored pixel mask. The color function displays the marginal JTMA intensity of the signal photon corresponding to $\int d^2 \vec k_i F(\vec k_s, \vec k_i) $.  The line at $x_s=0$ intersects 5 pixels having 3 unique radii, and its intersections with their boundaries are marked by red dots. These points are mapped through the optical system onto the corresponding boundaries of the regions in momentum space at the crystal shown in the preceding figure.}
\label{fig:JTMA}
\end{figure*}

\subsection{Pixel basis design}\label{sec:pixel basis design}

In our experiment, we choose to work in the discretized transverse-momentum, or macro-pixel basis. This basis provides several advantages over other spatial mode bases. First, projective measurements in bases that are unbiased with respect to the pixel basis require phase-only measurements. In contrast to the lossy amplitude and phase measurements required for Laguerre-Gauss modes~\cite{Qassim:2014fp, Bouchard:2018hr} and their superpositions, measurements with phase-only modulation are lossless in theory and produce a count rate that is independent of the specifics of the chosen basis. As a result, measurements in such bases maximize photon flux, are resistant to detector noise, and allow us to minimize the total number of measurements required. Second, as the distribution of macro-pixels is determined by circle packing formulas, this basis is compatible with state-of-the-art quantum communication technologies based on multi-core fibres~\cite{Carine:2020te,DaLio:2020jp,Lee:2019,Gomez:2020} and was recently employed for high-dimensional entanglement transport through a commercial multi-mode fibre~\cite{Valencia2019} as well as a test of genuine high-dimensional steering~\cite{Designolle2020}. Third and most significantly, informed by knowledge of the JTMA, we can optimise this basis to approach a maximally entangled state of the form $\ket{\Phi}=\frac{1}{\sqrt{d}}\sum_{m}\ket{mm}$, where $\ket{m}$ stands for one photon in mode m and none in the others. This in turn enables us to use powerful theoretical techniques that rely on mutually unbiased bases for entanglement certification. In contrast with Procrustean filtering techniques that achieve a similar goal by adding mode-dependent loss~\cite{Vaziri:2003cx,Dada:2011dn}, this optimisation can be done in a relatively lossless manner by tailoring the size and spacing of individual pixel modes.

As illustrated in Fig.~\ref{fig:JTMA}, the JTMA indicates that outer macro-pixel modes exhibit the strongest correlation albeit with lower amplitudes, while the inner pixel modes show weaker correlation with higher amplitudes. In order to obtain the highest fidelity to the maximally entangled state, the spacing and size of pixels can be optimised to minimise cross-talk arising from non-vanishing pump transverse bandwidth $\sigma_P$, while simultaneously equalizing pixel probability amplitudes and maximizing photon count rates. This proceeds as follows: the diameter of the circular region containing the macro-pixel basis is determined by the width $\sigma_S$ of the JTMA major axis, such that the outermost pixel modes for a chosen dimension have sufficient amplitude. The maximum coincidence rate for the outermost pixels is obtained by giving them the maximum radius allowed for a given dimension and the chosen spacing. Taking into account the decreasing correlation strength, we then proceed to choose the radius for the inner pixels such that the photon count rate is equal for all pixels, thereby approximating a state with equal Schmidt coefficients extremely well.


\begin{figure*}[ht!]
\centering\includegraphics[width=\textwidth]{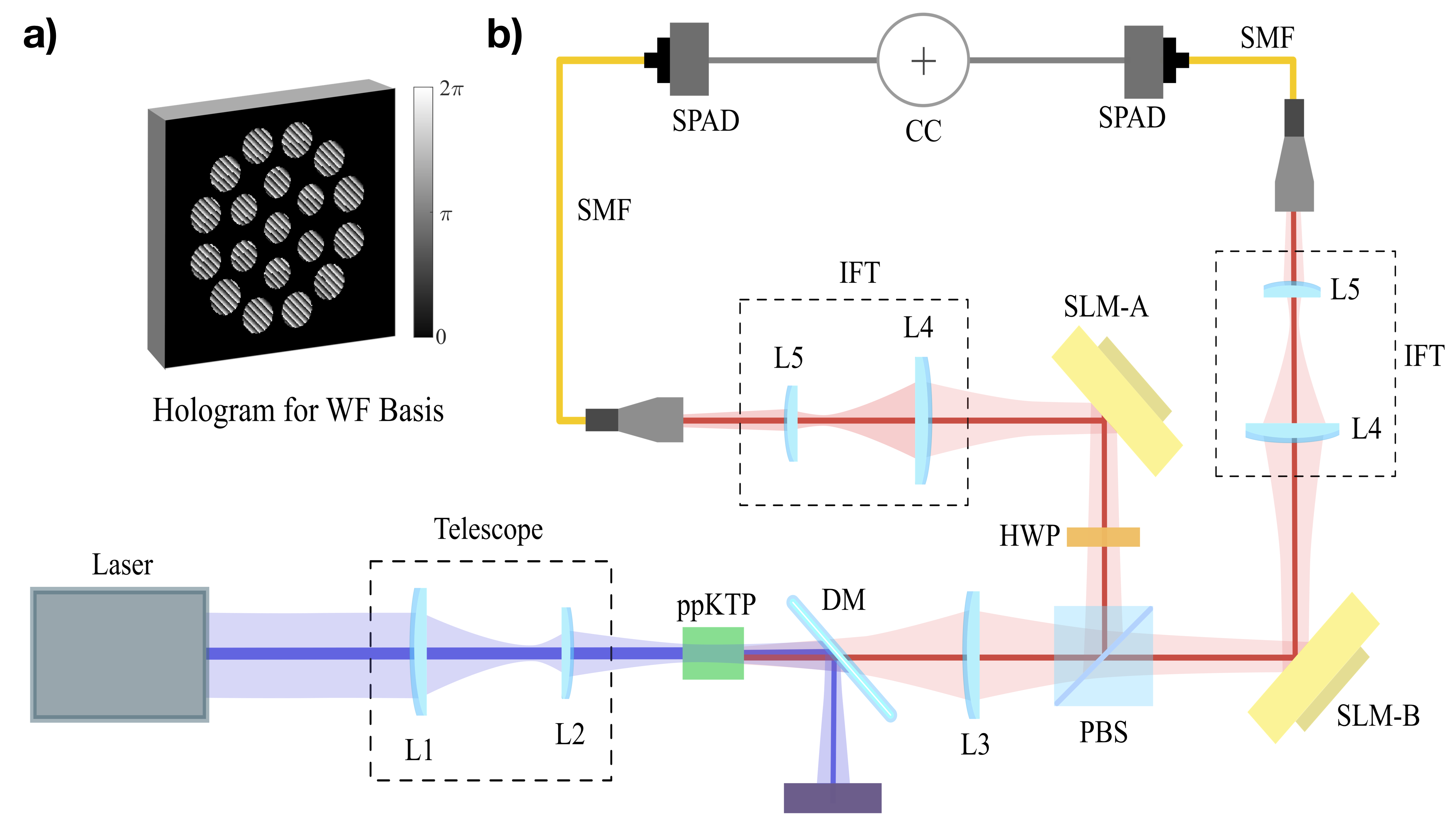}
\vspace*{-6mm}
\caption{\textbf{Experimental Setup.} a) An example computer-generated hologram used for projective measurements in a \nh{19-dimensional Wootters-Fields (WF) basis state (Eq.~(\ref{eq:WFbasis}))}. b) A grating-stabilized UV laser (405~nm) is shaped by a telescope system of lenses and used for generating a pair of infrared photons (810~nm) entangled in their transverse position-momentum via Type-II spontaneous-parametric-down-conversion (SPDC) in a non-linear ppKTP crystal. After removing the UV pump with a dichroic mirror (DM), the photons are separated with a polarising-beam-splitter (PBS) and made incident on two phase-only spatial light modulators (SLM) at an angle of $5^{\circ}$ (the $45^{\circ}$ angle shown is solely for the purpose of the illustration). Precise projective measurements in the pixel basis and any of its mutually unbiased bases are performed with the combination of SLMs, intensity-flattening telescopes (IFT), and single-mode-fibers. The filtered photons are detected by single-photon-avalanche-detectors (SPAD) connected to a coincidence counting logic (CC) that records time coincident events with a coincidence window of 0.2~ns.}
\label{fig:ExpSetup}
\end{figure*}

\subsection{Intensity-flattening telescopes}

As shown in Fig.~\ref{fig:ExpSetup}, the detection of our entangled state depends on a spatial-mode filtering scheme composed of holographic mode projectors implemented on spatial light modulators (SLMs) and single-mode fibers (SMFs). As Eq.~(\ref{eq:collectedJTMA}) indicates, the collection bandwidth of the entangled state is limited by the SMF Gaussian mode width $\sigma_C$ that depends on the specific characteristics of the fiber and coupling optics used. As a result, higher order modes are relatively suppressed, which has an especially adverse effect when measuring coherent high-dimensional superpositions of spatial modes. In recent work, we demonstrated an ``intensity-flattening" technique that dramatically improves the quality of general mode-projective measurements at the expense of adding loss~\cite{Bouchard:2018hr}.

This technique can be extended to the two-photon case, as depicted by the illustration of the collection modes superposed on the JTMA (represented by the purple curves in Fig.~\ref{fig:JTMA}~a). If the Gaussian back-propagated modes are made wider, higher order modes associated with the edges of the JTMA will couple more efficiently to the SMF, while lower order modes are relatively suppressed. This effectively allows us to make the approximation $G(\vec k_s, \vec k_i) \approx F(\vec k_s, \vec k_i)$  in Eq.~(\ref{eq:collectedJTMA}), equating the collected JTMA with the generated one. We implement this technique by using intensity-flattening telescopes (IFTs) that afocally magnify the back-propagated collection modes to a size that optimally increases the two-photon collection bandwidth while keeping loss at a tolerable level. While our previous work demonstrated the IFT technique with classical light from a laser, this work extends it to an entangled state for the first time, demonstrating a marked improvement in reconstructed two-photon state fidelity~\cite{Bavaresco:2018gw}. For additional details on the experimental setup we refer to Appendix~\ref{sec:appendix A.I}.


\subsection{High-dimensional entanglement witness}

Our entanglement certification technique builds on an entanglement-dimensionality witness that uses measurements in the chosen standard basis, complemented by measurements in a basis that is unbiased with respect to it, in order to efficiently certify high-dimensional entanglement~\cite{Bavaresco:2018gw}. Here, we briefly introduce this witness, identify its limitations when applied to experiment, and discuss our modifications that significantly improve its utility. 

The local Schmidt basis of a position-momentum entangled state, Eq.~(\ref{eq:JTMAformula}), which we designate as the \emph{standard basis}, normally consists of localised transverse spatial (or momentum) modes. When performing single-outcome projective measurements (as described above) in such a basis, the detector count rates are restricted by the limited projection onto the collection mode (a pixel). Therefore, measurements in such a spatially localised basis result in significantly fewer counts (per unit time) than measurements in bases that are unbiased\footnote{For brevity, we will refer to such bases simply as `unbiased' from here on, which is to be understood as being in reference to a chosen standard basis.} with respect to the spatially localized basis. 
This is the case because the vectors in the unbiased basis correspond to equally weighted superpositions of the spatially localised modes. Measurements in a spatially delocalised unbiased basis can thus collect photons incident on any of the pixels.  

As a result, `unbiased' measurements take $\mathcal{O}(d)$ less time individually, or $\mathcal{O}(d^2)$ less time in the bipartite case, than localised standard basis single-outcome projective measurements for a given photon flux. As we demonstrate here, one may construct an efficient witness for certifying high-dimensional entanglement purely from measurements in two (or more) spatially delocalised bases that are mutually unbiased with respect to each other (and with respect to the original standard basis), without resorting to measurements in the standard basis at all. This construction is extremely beneficial and permits us to lower the measurement times for certifying high-dimensional entanglement significantly.

The method described in Ref.~\cite{Bavaresco:2018gw} allows one to certify high-dimensional entanglement by estimating a lower bound on the fidelity $F(\rho,\Phi)$ of a measured state $\rho$ with a chosen target state $\ket{\Phi}=\frac{1}{\sqrt{d}}\sum_{m}\ket{mm}$. By performing two-photon measurements in the standard basis $\{\ket{mn}\}$ and a second basis $\{\ket{\tilde{m}_{k}\tilde{n}_{k}^*}\}$ that is unbiased with respect to the standard basis, one can obtain a lower bound to the fidelity $F(\rho,\Phi)$. As explained in Ref.~\cite{Bavaresco:2018gw}, the fidelity can in turn be used to bound the Schmidt number of $\rho$ from below.
That is, the certified entanglement dimensionality is the maximal $d_{\mathrm{ent}}$ such that $(d_{\mathrm{ent}}-1)/d \geq F(\rho,\Phi)$. The unbiased bases can be constructed in a standard manner by following the prescription by Wootters and Fields~\cite{Wootters1989}, i.e.,
\begin{align}
    \ket{\tilde{j}_{k}} &= \tfrac{1}{\sqrt{d}} \sum_{m=0}^{d-1} \omega^{jm+km^{2}} \ket{m},
    \label{eq:WFbasis}
\end{align}
where $\omega=\exp(\tfrac{2\pi \nr i}{d})$ is the principal complex d$-th$ root of unity, $k\in\{0,\dots,d-1\}$ labels the chosen basis, and $j\in\{0,\dots,d-1\}$ labels the basis elements. We refer to these bases as Wootters-Fields (WF) bases. Notice that when $d$ is an odd prime, which we will assume from now on (see Appendix~\ref{sec:appendix A.II}), the set of $d$ bases $\{\ket{\tilde{j}_k}\}_j$ together with the standard basis $\{\ket{m}\}_m$ forms a complete set of $d+1$ mutually unbiased bases (MUBs) with the property that the overlap between any two basis states chosen from any two different bases in the set have the same magnitude.

In order to certify entanglement without using the standard basis, we use a property of \nf{the} maximally entangled \nf{state $\ket{\Phi}=\frac{1}{\sqrt{d}}\sum_{m}\ket{mm}$}. Such states are invariant under transformations $ (U\otimes U^*)$ for any unitary operator $U$, and thus have the same form in any WF basis, allowing us to express our target state as $\ket{\Phi} = \frac{1}{\sqrt{d}}\sum_{m}\ket{\tilde{m}_k\tilde{m}^*_k}$. The fidelity of our experimental state to this target state $F(\rho,\Phi)$ can then be expressed in terms of $k$-th WF basis and be split into two contributions, $F(\rho,\Phi)=F_{1}(\rho,\Phi)+F_{2}(\rho,\Phi)$, where
\begin{subequations}
\begin{align}
    F_{1}(\rho,\Phi)    &:=\tfrac{1}{d}\sum_{m}\bra{\tilde{m}_k\tilde{m}^*_k}\rho\ket{\tilde{m}_k\tilde{m}^*_k},\\[1mm]
    F_2(\rho,\Phi)   &:=\tfrac{1}{d}\sum_{m\neq n} \bra{\tilde{m}_k\tilde{m}_k^*}\rho\ket{\tilde{n}_k\tilde{n}^*_k}.
    \label{eq:FidelityinMUB}
\end{align}
\end{subequations}
While the first term $F_{1}(\rho,\Phi) $ can be calculated directly from measurements in $k$-th WF basis $\{\ket{\tilde{m}_{k}\tilde{n}_{k}^*}\}$, we show in Appendix~\ref{sec:some name} how one can determine a lower bound for $F_2(\rho,\Phi)$ with measurements in a second WF basis (labeled by $k\pri\neq k$ in the construction above) $\{\ket{\tilde{m}_{k\pri}\tilde{n}_{k\pri}^*}\}$, allowing us to calculate a lower bound for the fidelity to the maximally entangled state given by
\begin{align}
    & F(\rho,\Phi)    \ge\, \tilde{F}(\rho,\Phi) =\,
    \tfrac{1}{d}\sum_{m}\bra{\tilde{m}_{k}\tilde{m}_{k}^{*}}\rho\ket{\tilde{m}_{k}\tilde{m}_{k}^{*}}\,\nonumber\\[0.5mm]
    &\hspace*{4mm}
    +
    \sum\limits_{m}\bra{\tilde{m}_{k\pri}\tilde{m}_{k\pri}^{*}}\rho\ket{\tilde{m}_{k\pri}\tilde{m}_{k\pri}^{*}}
    -\,\tfrac{1}{d}\label{eq:fid bound two WF bases_main}\\[0.5mm]
    &-\hspace*{-2.5mm}
    \sum\limits_{\substack{m\neq m\pri\!\\  m\neq n \\ n\neq n\pri\!\\ n\pri\neq m\pri\\}}\hspace*{-1.5mm}
    \tilde{\gamma}_{mnm\pri n\pri}\,
    \sqrt{
     \bra{\tilde{m}_{k}\pri \tilde{n}_{k}^{\prime\nr *}}\rho\ket{\tilde{m}_{k}\pri \tilde{n}_{k}^{\prime\nr *}}
     \bra{\tilde{m}_{k}\tilde{n}_{k}^{*}}\rho\ket{\tilde{m}_{k}\tilde{n}_{k}^{*}}
    }\, ,\nonumber
\end{align}
where the term $\tilde{\gamma}_{mnm\pri n\pri}$ vanishes whenever $(m-m\pri-n+n\pri)\!\!\mod(d)\neq0$, and is equal to $\frac{1}{d}$ otherwise.

This result shows that measurements in any two WF bases can be used to lower bound the fidelity to a target state, and hence to certify the entanglement dimensionality. In particular, this allows us to bypass the use of measurements in the standard basis used in~\cite{Bavaresco:2018gw}, while providing significant flexibility in the measurement settings. 
By construction, the witness works best if the chosen target state is close to the state $\rho$ whose entanglement is being certified. Nevertheless, the witness is valid for any choice of target state, and thus requires no assumptions about the state $\rho$. \nh{We do make certain assumptions about our local measurement devices, which means that our witness is not fully device-independent. Below, we discuss these assumptions on the state and devices in the context of our experiment.}

While full device-independence is challenging to achieve, almost beyond practicability with current technology, it is usually also more than what is required for practical security. Assumptions on states and devices can enter in different form, including assumptions on certain properties of the state (purity, conservation of energy or momentum), to assuming certain device properties (perfect measurements, background/dark count subtraction). Our only assumption is that our measurements work in a $d$-dimensional Hilbert space and that the eigenstates corresponding to measurement outcomes in different bases are mutually unbiased with a specific phase relation. This assumption can and has been tested by preparing eigenstates of one basis and measuring in another basis, to estimate if all possible results in the latter basis are equally distributed \cite{Bouchard:2018hr}. No further assumptions on the states are needed in our setting, and we do not perform any background subtraction. All assumptions that we make can thus be tested experimentally.
We believe that this presents a reasonable compromise between fully device-independent security and practicability of implementation. Moreover, while entanglement witnesses based on MUBs can in principle be constructed in a device-independent way~\cite{TavakoliFarkasRossetBancalKaniewski2019}, device-independent certification of the Schmidt number is generally not possible~\cite{HirschHuber2020}.


\section{Results}

We implement the above improvements in basis optimisation, spatial-mode measurements, and entanglement certification in a two-photon entanglement experiment. Figure~\ref{fig:ExpSetup}~a) depicts a tailored diffractive hologram used for performing spatial-mode projective measurements in a 19-dimensional \nh{WF basis. Notice that contrary to the holograms used for measuring in the standard basis, where projections on each state are made by  ``switching on'' a single macro-pixel at a time, the WF basis states require us to display all macro-pixels at once, each with an appropriate phase selected according to Eq. (\ref{eq:WFbasis}).}
 
The pixel orientations are determined by circle packing formulas and the pixel sizes and spacings are optimised according to the JTMA as described above in Sec.~\ref{sec:pixel basis design}. The experimental setup is depicted in Fig.~\ref{fig:ExpSetup}~b). While not discussed in the previous section, an additional consideration involves the use of Gaussian beam simulations to ensure that mode waists and crystal conjugate planes are optimally located in the setup (see Appendix~\ref{sec:appendix A.I}). A grating-stabilised CW laser at 405~nm is loosely focused onto a 5~mm ppKTP crystal to produce photon pairs at 810~nm entangled in their transverse position-momentum via type-II spontaneous parametric down-conversion (SPDC). Single-outcome projective measurements in the optimized macro-pixel basis (or any of its MUBs) are performed on the photon pairs using diffractive holograms implemented on spatial light modulators (SLMs), \nh{where each macro-pixel mode is defined by displaying a grating over the corresponding circular area. To perform transformations over the modes of interest, different phases for each macro-pixel are encoded as a relative shift in the grating with respect to the other macro-pixels. As shown in Figure 2a), while the black area around the macro-pixel modes is effectively “switched off” because no grating is displayed, the grating inside each macro-pixel effectively “switches on” the selected modes so they efficiently couple to single-mode fibers. After the SLMs, }two intensity-flattening telescopes (IFT) are used for ensuring precise, generalized spatial-mode measurements across the entire modal bandwidth of interest (see Appendix~\ref{sec:appendix A.I}).

\begin{figure*}[t!]
\centering\includegraphics[width=\textwidth]{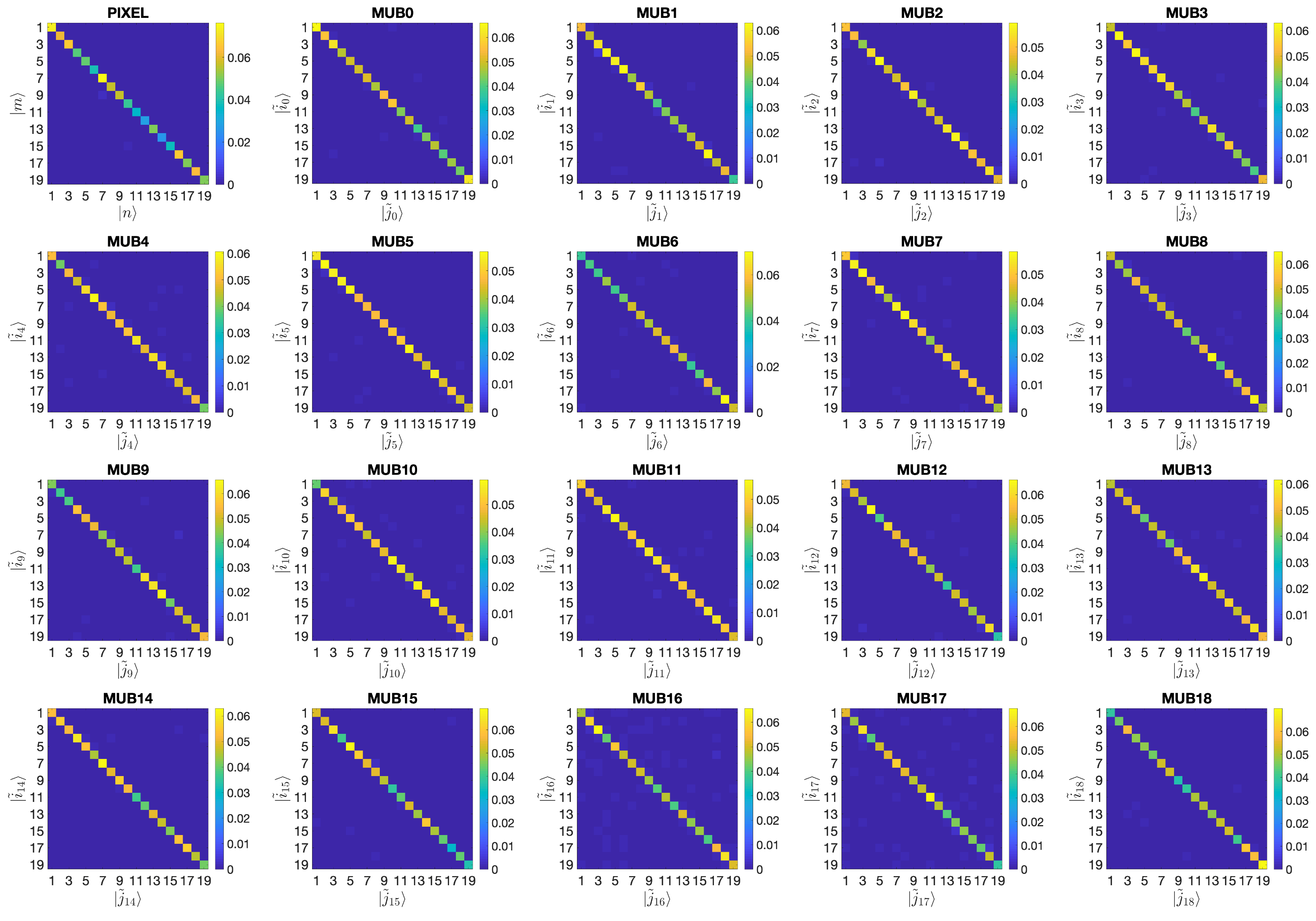}
\vspace*{-4mm}
\caption{\textbf{Experimental data in 19 dimensions.} Normalised two-photon coincidence counts showing correlations in the standard 19-dimensional pixel basis $\{\ket{mn}\}_{mn}$ and its 19 mutually unbiased bases $\{\ket{\tilde{i}_{k}\tilde{j}_{k}^{*}}\}_{i,j}$. With these measurements, we obtain a fidelity of $F(\rho,\Phi^+) = 94.4 \pm 0.4$\% to a 19-dimensional maximally entangled state, certifying 18-dimensional entanglement.}
\label{fig:19dim}
\end{figure*}

\begin{smallboxtable}[floatplacement=t]{Fidelities $F(\rho,\Phi^+)$ to the maximally entangled state, obtained via measurements in all mutually unbiased bases in dimension $d$}{Flist}
\begin{center}
\begin{tabularx}{\textwidth}{p{0.12\textwidth}p{0.12\textwidth}p{0.3\textwidth}p{0.3\textwidth}}
\hline
$d$ & $d_{\mathrm{ent}}$ & $F(\rho,\Phi^+)$ & $E_{\mathrm{oF}}$ (ebits)\\
\hline
3 & 3& $98.2 \pm 0.9$ \% &\ $1.4 \pm 0.1$ \\
5& 5& $97.5 \pm 0.5$ \%  &\ $2.2 \pm 0.1$ \\
7& 7& $96.4 \pm 0.7$ \% &\ $2.6 \pm 0.1$ \\
11& 11& $93.9 \pm 0.7$ \%  &\ $3.1 \pm 0.1$ \\
13& 13& $94.1 \pm 0.6$ \% &\ $3.3 \pm 0.1$ \\
17& 17& $94.3 \pm 0.3$ \% &\ $3.6 \pm 0.1$ \\
19& 18& $94.4 \pm 0.4$ \% &\ $3.8 \pm 0.1$ \\
\end{tabularx}
\end{center}
\end{smallboxtable}

Our modified entanglement witness allows characterizing the generated state by lower bounding its fidelity to the maximally entangled state through measurements in any two mutually unbiased bases. However, if one uses the standard basis and all of its MUBs, the fidelity bound becomes tight and one can estimate the exact fidelity to the maximally entangled state~\cite{Bavaresco:2018gw}. To showcase the effect of our improvements on the resulting pixel entanglement quality, we measure correlations in all MUBs for subspaces of prime dimension up to $d=19$. From these measurements, we obtain record fidelities to the maximally entangled state of $\approx 94\%$ or greater in all subspaces, certifying entanglement dimensions of $d_{\mathrm{ent}}=d$ for $d=3,...,17$ (see Table~\ref{table:Flist}). In $d=19$, we obtain an estimate of exact fidelity of $94.4 \pm 0.4$\%, certifying an entanglement dimensionality of $d_{\mathrm{ent}}=18$ (and just below the fidelity bound of $94.7\%$ for $d_{\mathrm{ent}}=19$). We also calculate the entanglement-of-formation ($E_{\mathrm{oF}}$), which is an entropic measure of the amount of entanglement needed to create our state~\cite{Wootters:2001tn,Coles:2017} (please see \nf{Appendix~\ref{sec:entanglement of formation}} for more details). 
 
\nh{Our $E_{\mathrm{oF}}$ reaches a value of $3.8 \pm 0.1$ ebits in $d=19$, which is already higher than previously reported values without subtracting accidental/background counts or making any assumptions on the state
~\cite{Schneeloch:2019uw,Martin2017}}
Measured correlation data in all 20 mutually unbiased bases of the $d=19$ pixel basis are shown in Fig.~\ref{fig:19dim}. The uncertainty in fidelity is calculated assuming Poisson counting statistics and error propagation via a Monte-Carlo simulation of the experiment. 

We now demonstrate the speed of our measurement technique, which is enabled by the use of phase-only pixel-basis holograms and our entanglement witness that allows the use of any two pixel MUBs for certifying entanglement. We perform measurements in two MUBs for prime dimensions ranging from $d=19$ to $97$. In $d=19$, we are able to certify a fidelity lower bound of ($93\pm 2$)\% using a total data acquisition time of 3.6 minutes (corresponding to 722 single-outcome measurements), certifying an entanglement dimensionality of $d_{\mathrm{ent}}=18\pm 1$. In comparison, prior experiments on OAM entanglement required a measurement time two orders of magnitude larger for certifying just $d_{\mathrm{ent}}=9$ in 11 dimensions~\cite{Bavaresco:2018gw}. A summary of results for dimensions 19 and higher is shown in Table~\ref{table:FboundList}. As can be seen, the entanglement quality starts dropping above $d\approx 30$, which is a result of the limited resolution of our devices and the SPDC generation bandwidth itself. Nevertheless, we are able to certify an entanglement dimensionality of $d_{\mathrm{ent}}=55\pm 1$ (See Fig.~\ref{fig:97dim}) in a 97-dimensional space and an entanglement-of-formation of $E_{\mathrm{oF}}=4.0\pm 0.1$ ebits in a 31-dimensional space, which both constitute a record-breaking entanglement dimensionality and entanglement-of-formation certified without any assumptions on the state. A very recent experiment demonstrated comparable state fidelities and $E_{\mathrm{oF}}$ values in $d=32$ using multi-path down-conversion~\cite{Hu:2020ws}. The required measurement time in the 97-dimensional pixel space was on the order of a day, while using standard basis measurements would require $\sim$14 years of measurement time for data of similar quality.

\begin{smallboxtable}[float=b!]{Fidelity bounds $\tilde{F}(\rho,\Phi^+)$ obtained via measurements in two mutually unbiased bases in dimension $d$}{FboundList}
\begin{center}
\begin{tabularx}{\textwidth}{p{0.1\textwidth}p{0.19\textwidth}p{0.24\textwidth}p{0.3\textwidth}}
\hline
$d$ & $d_{\mathrm{ent}}^{\,\dagger \dagger}$ & $F(\rho,\Phi^+)$ & $E_{\mathrm{oF}}$ (ebits)\\
\hline
19 & 18(+1)& $93 \pm 2$ \% &\ $3.6 \pm 0.2$\\
23& 22(-1)& $92 \pm 2$ \% &\ $3.6 \pm 0.2$\\
29& 27(-1)& $90 \pm 2$ \% &\ $3.8 \pm 0.2$\\
31& 29(-1)& $92 \pm 2$ \% &\ $4.0 \pm 0.1$\\
37& 32(-1)& $84 \pm 1$ \% &\ $3.4 \pm 0.1$\\
51$^\dagger$& 38(-1)& $73 \pm 1$ \% &\ $2.8 \pm 0.1$\\
97& 55($\pm 1$)& $56 \pm 1$ \% &\ $1.9 \pm 0.1$\\
\end{tabularx}
\end{center}
\footnotesize{$^\dagger$ Our two-MUB witness remains valid even for this non-prime dimension, as the two measurement bases used ($k = 0,1$) are still mutually unbiased, and their phase relationship allows the bound to hold.}\\
\footnotesize{$^{\dagger \dagger}$ 

\nf{Here we present the entanglement dimensionalities certified by the mean values of the corresponding fidelity bounds, while other Schmidt-number thresholds within the confidence interval are shown in parenthesis.}}
\end{smallboxtable}

\begin{figure*}[htbp]
\centering\includegraphics[width=0.98\textwidth]{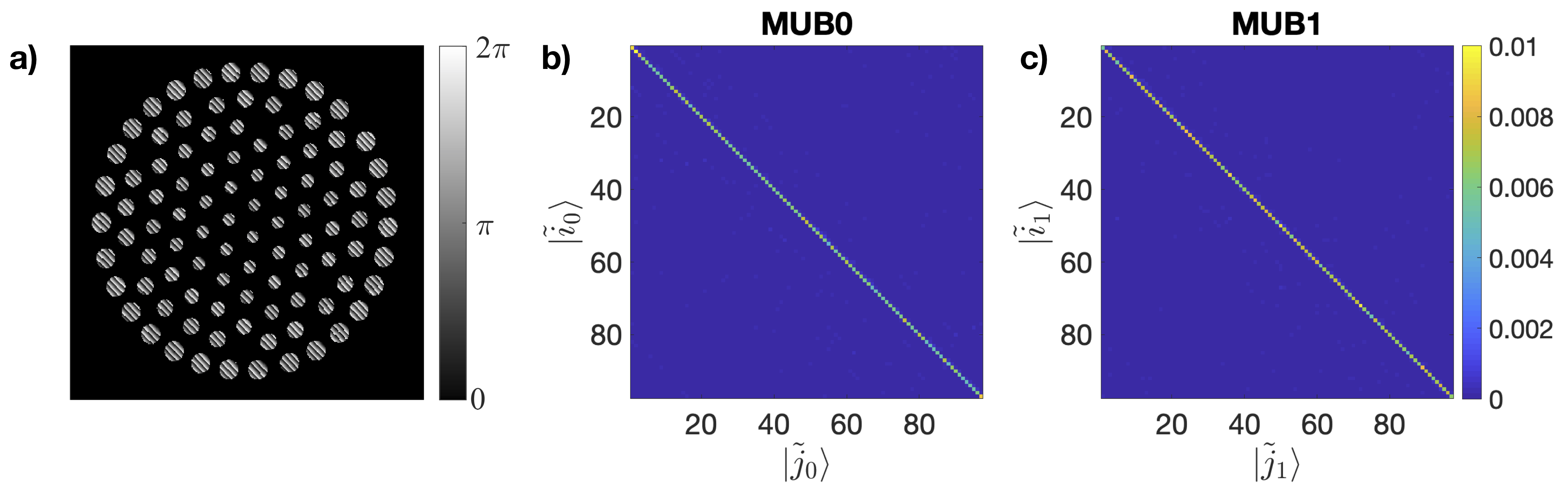}
\vspace*{-4mm}
\caption{\textbf{Experimental data in 97 dimensions.} a) An example of the diffractive hologram used for projective measurements in a 97-dimensional \nh{WF basis}. Normalised two-photon coincidence counts showing correlations in the b) first and c) second mutually unbiased bases ($k=0,1$) to the 97-dimensional pixel basis $\{\ket{\tilde{i}_{k}\tilde{j}_{k}^{*}}\}_{i,j}$. Using these measurements, we obtain a fidelity bound of $\tilde{F}(\rho,\Phi^+) = 56 \pm 1$\% that is above the bound $B_{54}=54/97=0.5567$, thus certifying an entanglement dimensionality of $d_{\mathrm{ent}}=55 \pm 1$.}
\label{fig:97dim}
\end{figure*}


\section{Conclusion}

We have demonstrated the certification of photonic high-dimensional entanglement in the transverse position-momentum degree-of-freedom with a record quality, measurement time, and entanglement dimensionality. These results are made possible through the combination of three new methods: tailored design of the spatial-mode basis, precise two-photon spatial-mode measurements, and a versatile entanglement witness based on measurements in mutually unbiased bases (MUBs). As a demonstration of the quality of our entanglement, we achieve state fidelities of $\approx 94\%$ or above in local dimensions of 19 or below, certifying up to 18-dimensional entanglement. In addition, the use of any two MUBs enables the measurement of 18-dimensional entanglement in 3.6 minutes, a reduction in measurement time by more than two orders of magnitude over previous demonstrations~\cite{Bavaresco:2018gw}. Finally, we are able to certify an entanglement dimensionality of at least 55 local dimensions and an entanglement-of-formation of 4 ebits, which to our knowledge is the highest amount certified without any assumptions on the state. While we have used projective single-outcome measurements in our experiment, recent progress on generalized multi-outcome measurement devices~\cite{Mirhosseini:2013em,Fontaine:2019ja,Brandt:2020er} and superconducting detector arrays~\cite{Wollman:2019kb} promises to increase measurement speeds even further. Our results show that high-dimensional entanglement can indeed break out of the confines of an experimental laboratory and enable noise-resistant entanglement-based quantum networks that saturate the information-carrying potential of a photon. 


\begin{acknowledgements}
  This work was made possible by financial support from the QuantERA ERA-NET Co-fund (FWF Project I3773-N36) and the UK Engineering and Physical Sciences Research Council (EPSRC) (EP/P024114/1). NF acknowledges support from the Austrian Science Fund (FWF) through the project P 31339-N27. MH acknowledges funding from the Austrian Science Fund (FWF) through the START project Y879-N27. MP acknowledges funding from VEGA project 2/0136/19 and GAMU project MUNI/G/1596/2019. 
\end{acknowledgements}


\bibliographystyle{apsrev4-1fixed_with_article_titles_full_names}
\bibliography{extrarefs}

\clearpage
\onecolumngrid
\appendix
\section*{Appendix}
\renewcommand{\thesubsection}{A.\Roman{subsection}}
\renewcommand{\thesection}{}
\setcounter{equation}{0}
\numberwithin{equation}{section}
\renewcommand{\theequation}{A.\arabic{equation}}
\setcounter{figure}{0}
\renewcommand{\thefigure}{A.\arabic{figure}}
\renewcommand{\theHfigure}{A.\arabic{figure}}

We have demonstrated photonic high-dimensional entanglement in the discretized transverse position-momentum degree of freedom that allows the assumption-free certification of entanglement with a record quality, measurement time, and entanglement dimensionality. In this appendix we provide additional information on the experimental setup used for engineering our entangled state, give a detailed proof of the fidelity bound that we use for certifying high-dimensional entanglement, and provide a description of the entanglement of formation bound we have used in the main text for quantifying the amount of entanglement of the produced state.

\subsection{Experimental Setup}\label{sec:appendix A.I}
A nonlinear ppKTP crystal ($1~mm \times 2~mm \times 5~mm$) is pumped with a continuous wave grating-stabilized 405~nm laser in order to generate a pair of photons at 810~nm entangled in their transverse position-momentum via the process of type-II spontaneous parametric down conversion (SPDC). The crystal is temperature-tuned and phase-matching conditions are met by housing it in a custom-built oven that keeps it at 30$^{\circ}$C. With the purpose of increasing the dimensionality of the generated state, we consider Gaussian beam propagation and the transformation by two lenses to determine how to shape our beam and loosely focus it on the nonlinear crystal. As shown in Fig.~\ref{fig:pumpfocusing}, the pump laser goes through a telescope system where the focal lengths and position of the lenses are chosen such that the final beam waist is located at the crystal with a large enough size to increase the number of generated modes, but without clipping the beam by the crystal aperture. 

\begin{figure*}[b!]
\centering\includegraphics[width=0.7\textwidth]{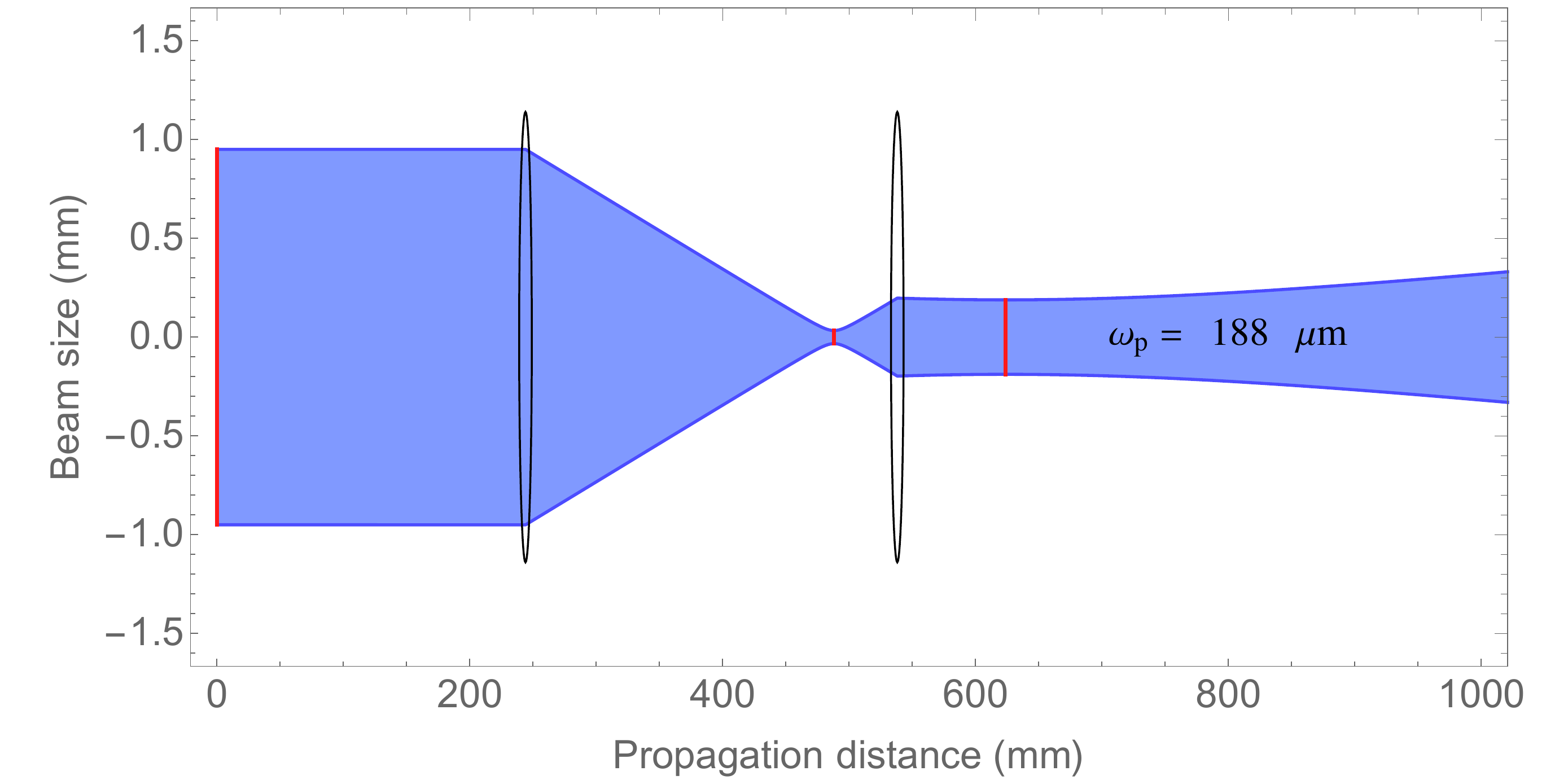}
\caption{\textbf{Gaussian beam propagation for the pump.} The collimated UV pump (beam waist $\omega_0 = 950\mu$m) propagates through a telescope system composed of two lenses with $f_1 = 250$~mm and $f_2 = 50$~mm. After the telescope, the pump beam waist is located at a distance of 50~mm away from the second lens, with a size of $w_\textrm{p} = 188\mu$m. We place the crystal such that this final beam waist is at its longitudinal center.}
\label{fig:pumpfocusing}
\end{figure*}

After the pump is removed with a dichroic mirror (DM), the entangled pair of photons is separated with a polarising beam-splitter (PBS). Each of the photons is made incident on a phase-only spatial light modulator (SLM, Hamamatsu X10468-02) that is placed in the Fourier plane of the crystal using a lens (see Fig.~\ref{fig:SPDCatSLM}). For the reflected photon to be manipulated by the SLM, we use a half-wave-plate (HWP) to rotate its polarisation from vertical to horizontal. Computer-generated holograms displayed on each SLM allow us to select particular spatial modes of the incident light and convert them into a Gaussian mode, which effectively couples (using a 10X objective) into a single-mode fiber (SMF) that carries these filtered photons to a single photon avalanche detector (SPAD). In effect, this allows us to perform projective measurements of any complex spatial mode. Time-coincident events between the two detectors are registered by a coincidence counting logic (CC). \nh{Thus, the states considered in our experiment are post-selected on detecting two photons.}

\begin{figure*}[t!]
\centering\includegraphics[width=0.8\textwidth]{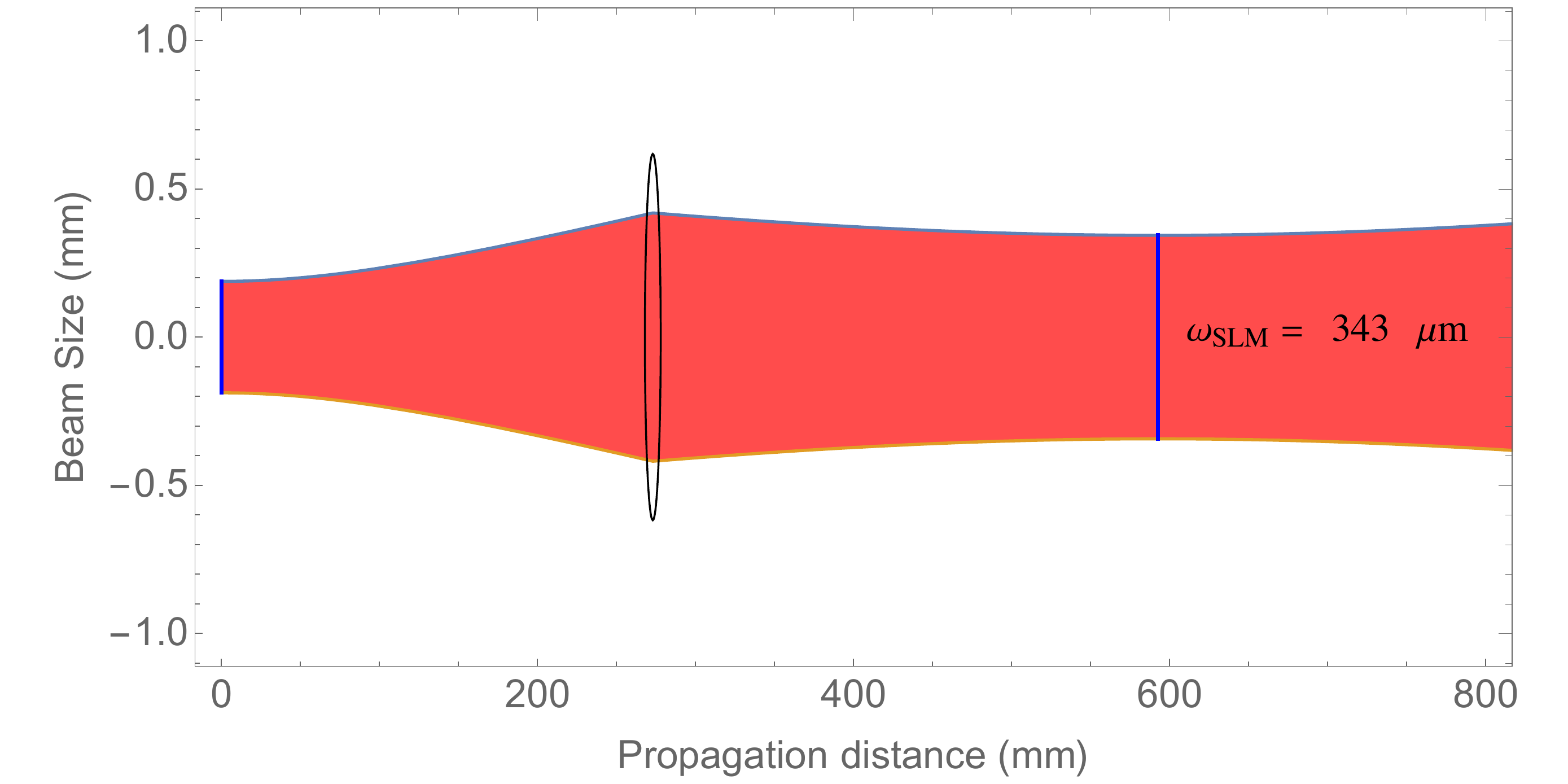}
\caption{\textbf{Gaussian beam propagation from Crystal to SLM.} For the design of the experimental setup, we model the propagation of the generated photons to the spatial light modulators assuming a Gaussian mode that has an initial beam waist of $\omega_{\textrm{SPDC}} = 188$~$\mu$m, which corresponds to the pump beam waist at the longitudinal center of the crystal. Using a lens of $f=250$~mm, we place the Fourier plane of the crystal at the SLM (represented in the figure by the blue line at the right). A Gaussian mode is this plane would have a beam waist of $\omega_{\textrm{SLM}} = 343$~$\mu$m on the SLM after propagation from the crystal.}
\label{fig:SPDCatSLM}
\end{figure*}

\begin{figure*}[b!]
\centering\includegraphics[width=0.7\textwidth]{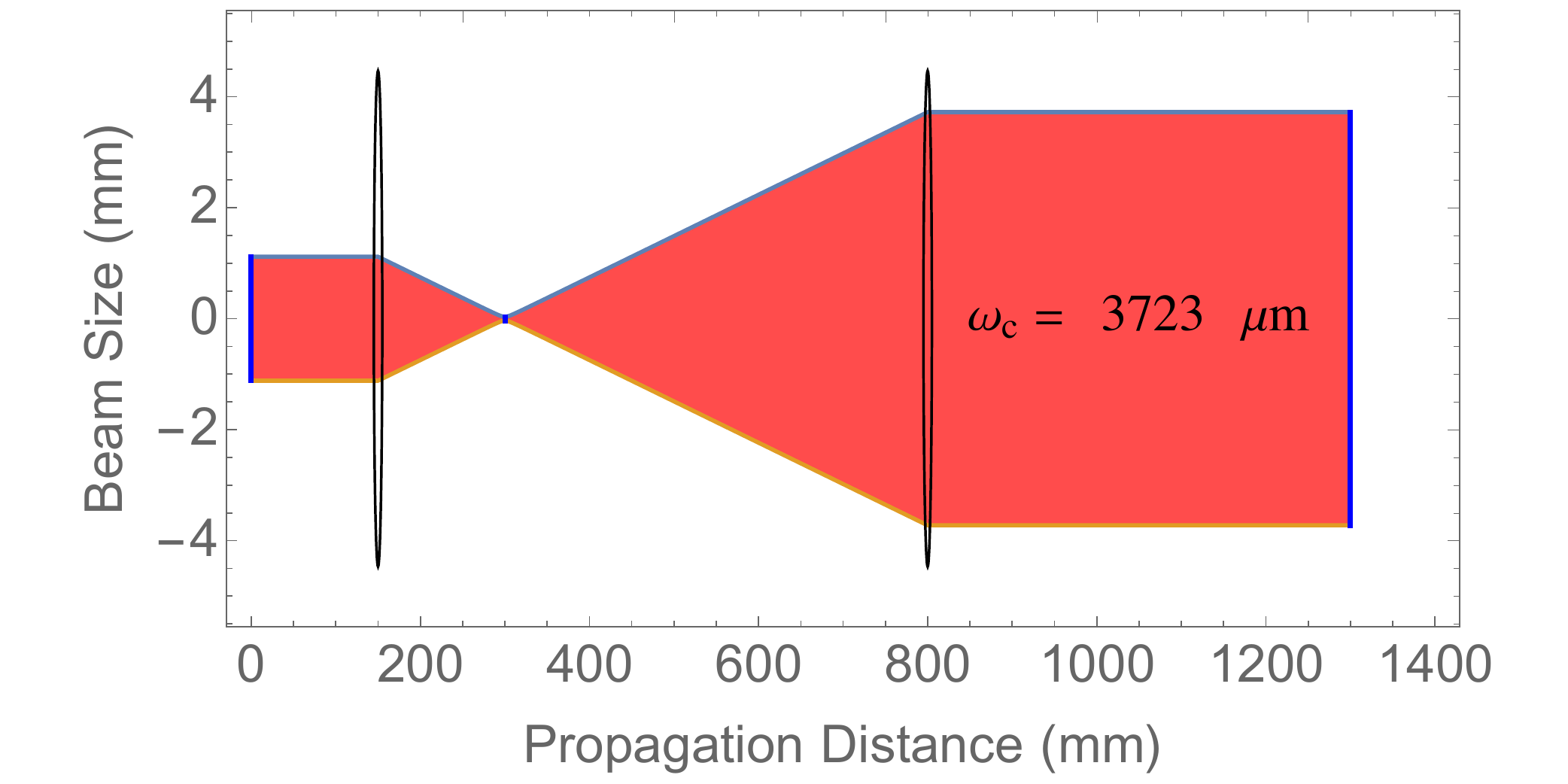}
\caption{\textbf{Intensity Flattening Telescope.} Consider a virtually backward-propagating collimated beam that is launched by a 10X microscope objective with an initial beam waist of $\omega_{\textrm{BP}} = 1.117$~mm. The intensity flattening telescope magnifies the collection mode by a factor of 3.3 using lenses with $f_1 = 150$~mm and $f_2=500$~mm. The resulting collection mode beam waist in the plane of the SLM is $\omega_\textrm{c} = 3723$~$\mu$m, which determines the maximum mode size that we can use for the pixel holograms without introducing large amounts of loss or measurement imprecision.}
\label{fig:IFTbackprop}
\end{figure*}

The accuracy of the projective measurement performed by the combination of an SLM and SMF is ensured through the use of intensity flattening telescopes (IFT)~\cite{Bouchard:2018hr}. \nh{As represented by the purple curves in Fig. 1a), the use of single-mode fibres introduces a Gaussian component into the collected JTMA and restricts the detection of higher-order modes. In order to reduce this negative effect, we use the IFT to afocally decrease the size of the mode that is propagating from the SLM to the objective lens, thus removing the Gaussian component introduced by the use of a SMF and recovering the orthogonality between spatial modes of a given basis. This technique can be better understood when one considers the overlap between the generated and collected modes on the plane of the SLM, where the telescopes magnify the back-propagated collection mode and effectively “flatten” the intensity distribution of the Gaussian component on the collected state.}
This is especially important when measuring coherent superpositions of spatial modes or pixel MUBs. In our experiment, the back-propagated mode is magnified by a factor of 3 (see Fig.~\ref{fig:IFTbackprop}), which allows us to increase the bandwidth of the measured modes while minimizing losses that would hinder our efficiency.  

The quality of our measurements also relies on the design of our bases. We choose to work with the Pixel basis, a discrete position-momentum basis composed of circular macro-pixels arranged inside a circle. The mutually unbiased bases to the pixel basis are a coherent superposition of the pixel states with a specific phase relationship between them~\cite{Wootters1989}. As described in the main text, in order to obtain the highest fidelities to a maximally entangled state, we try and equalise the probabilities of measuring each pixel mode by adapting the size of the macro-pixels, the spacing between them, and the size of the circle they are in, according to the characteristics of the joint-transverse-momentum-amplitude (JTMA) of the state we are producing. Doing so, we improve the visibility of the correlations in mutually unbiased pixel bases, which increases the fidelity of our state to the maximally entangled state.

\clearpage
\subsection{Derivation of the dimensionality witness}\label{sec:appendix A.II}

\nf{In this section, we provide a derivation of the fidelity lower bound based on two measurement bases that was originally introduced in Ref.~\cite{Bavaresco:2018gw}. In Sec.~\ref{sec:some name}, we then present an adaption of these fidelity witnesses that permits us to increase the speed with which the required measurements can be performed in our experiment.} 

Consider a bipartite quantum system with Hilbert space $\mathcal{H}\SAB=\mathcal{H}\SA\otimes\mathcal{H}\SB$, with equal local dimensions $\dim(\mathcal{H}\SA)=\dim(\mathcal{H}\SB)=d$, and an a priori unknown state $\rho$ of this system. For the certification of the Schmidt number w.r.t. the bipartition into subsystems $A$ and $B$, we want to use only a few measurement settings (local product bases) to give a lower bound on the fidelity $F(\rho,\Phi)$ to the target state $\ket{\Phi}=\sum_{n}\lambda_{n}\ket{nn}\SAB$, given by:
\begin{align}
    F(\rho,\Phi) = \tr(\ketbra{\Phi}{\Phi}\rho) = \sum_{m,n=0}^{d-1}\brakket{mm}{\rho}{nn}.
    \label{eq:Fidelityexplicit}
\end{align}
The entanglement dimensionality can be deduced from the fidelity taking into account that for any state $\rho$ of Schmidt number $r\le d$, the fidelity of Eq.~(\ref{eq:Fidelityexplicit}) is bounded by:
\begin{align}
    F(\rho,\Phi) \le B_r(\Phi):= \sum_{m=0}^{r-1}\lambda_{i_m}^2,
    \label{eq:Fidelitybounded}
\end{align}
where the sum runs over the $i_m$ with $m\in\{0,...,d-1\}$ such that $\lambda_{i_m} \ge \lambda_{i_m\pri} \forall m\ge m\pri$. Hence, any state with $F(\rho,\Phi) > B_r(\Phi)$ must have an entanglement dimensionality of at least $r+1$.

Our goal is to obtain a (lower bound on the) fidelity that is as large as possible for the target state whose Schmidt rank (the number of non-vanishing coefficients $\lambda_{n}$) is as close as possible to the local dimension $d$. Here, we want to focus on the case where the target state is the maximally entangled state $\ket{\Phi^+}$, i.e., where $\lambda_{n}=1/\sqrt{d}\,\forall\,n$. 

The method for certifying high-dimensional entanglement (in particular, the Schmidt number) described in Ref.~\cite{Bavaresco:2018gw} works in the following way. First, one designates a standard basis $\{\ket{mn}\}_{m,n=0,\ldots,d-1}$, measures locally w.r.t. this basis, obtaining estimates for the matrix elements $\{\bra{mn}\rho\ket{mn}\}_{m,n}$ from the coincidence counts $\{N_{ij}\}_{i,j}$ in the chosen basis via
\begin{align}
    \bra{mn}\rho\ket{mn}    &=\,\frac{N_{mn}}{\sum_{i,j}N_{ij}}.
    \label{eq:clicks vs matrix elements}
\end{align}
Then, one measures in $M$ of $d$ possible mutually unbiased bases $\{\ket{\tilde{i}_{k}\tilde{j}_{k}^{*}}\}_{i,j}$, where the label $k\in\{0,1,\ldots,d-1\}$ labels the chosen basis, and the asterisk denotes complex conjugation w.r.t. the standard basis. The local basis vectors are constructed according to the prescription by Wootters and Fields~\cite{Wootters1989}, that is,
\begin{align}
    \ket{\tilde{j}_{k}} &= \tfrac{1}{\sqrt{d}} \sum_{m=0}^{d-1} \omega^{jm+km^{2}} \ket{m},
    \label{eq:tiltedbasis}
\end{align}
where $\omega=\exp(\tfrac{2\pi \nr i}{d})$ are the complex $d$-th roots of unity, and we hence refer to these bases as Wootters-Fields (WF) bases. When $d$ is an odd prime, which we will assume from now on, the set of all bases $\{\ket{\tilde{j}_{k}}\}_{j}$ together with the standard basis $\{\ket{m}\}_{m}$ forms a complete set of $d+1$ mutually unbiased bases with the property that the overlaps between any two basis states of any two different bases from the set have the same magnitude. Measurements in any of these $d$ bases then provide the matrix elements  $\{\bra{\tilde{i}_{k}\tilde{j}_{k}^{*}}\rho\ket{\tilde{i}_{k}\tilde{j}_{k}^{*}}\}_{i,j}$ for any of the $M$ chosen values of $k$.

For simplicity, let us now first concentrate on the case where one measures only in a single of these Wootters-Fields bases, i.e., the case $M=1$. We use the corresponding matrix elements to bound the target state fidelity $F(\rho,\Phi)$ by splitting it in two contributions, $F(\rho,\Phi)=F_{1}(\rho,\Phi)+F_{2}(\rho,\Phi)$, where
\begin{subequations}
\begin{align}
    F_{1}(\rho,\Phi)    &:=\tfrac{1}{d}\sum_{m}\bra{mm}\rho\ket{mm},\\[1mm]
    F_2(\rho,\Phi)   &:=\tfrac{1}{d}\sum_{m\neq n} \bra{mm}\rho\ket{nn}.
    \label{eq:F2 appendix}
\end{align}
\end{subequations}
The first term $F_{1}(\rho,\Phi)$ can be calculated directly from the measurements in the standard basis, while the term $F_2(\rho,\Phi)$ can be bounded from below by the quantity $\tilde{F}_2(\rho,\Phi)\leq F_2(\rho,\Phi)$, given by
\begin{align}
    \tilde{F}_{2}    &:=\sum\limits_{j=0}^{d-1}\bra{\tilde{j}_{k}\tilde{j}_{k}^{*}}\rho\ket{\tilde{j}_{k}\tilde{j}_{k}^{*}}
        -\tfrac{1}{d}  
	-\hspace*{-4.5mm}
    \sum\limits_{\substack{m\neq m\pri\!,  m\neq n \\ n\neq n\pri\!, n\pri\neq m\pri\\}}\hspace*{-4.5mm}
    \tilde{\gamma}_{mm\pri nn\pri}\,
    \sqrt{\bra{m\pri n\pri}\rho\ket{m\pri n\pri}\bra{mn}\rho\ket{mn}},
    \label{eq:F2 bound tilted basis appendix}
\end{align}
where we have noted that $\sum_{m,n=0}^{d-1}\!\!  \bra{mn}\rho\ket{mn}=1$ by construction due to Eq.~(\ref{eq:clicks vs matrix elements}), and the prefactor $\tilde{\gamma}_{mm\pri nn\pri}$ is given by
\begin{align}
    \tilde{\gamma}_{mm\pri nn\pri} &=
    \begin{cases}0\ \ \mbox{if}\ \ (m-m\pri-n+n\pri)\!\!\!\!\mod(d) \neq0\\[1mm]
    \displaystyle
    \tfrac{1}{d}\ \ \mbox{otherwise}.
    \end{cases}
    \label{eq:tilde gamma appendix}
\end{align}
With the fidelity bound for this particular target state at hand, one can then certify a Schmidt number of $r$ whenever one finds a fidelity (bound) larger than $\tfrac{r}{d}$, see~\cite{Bavaresco:2018gw}.

To prove this, we focus on the information given by the elements of the density matrix that we obtain from measuring in the WF basis. In particular, the sum over the correlated WF matrix elements can be split into three terms, i.e.,
\begin{align}
    \sum\limits_{j=0}^{d-1}\bra{\tilde{j}_{k}\tilde{j}_{k}^{*}}\rho\ket{\tilde{j}_{k}\tilde{j}_{k}^{*}} &=:\Sigma\,=\,\Sigma_{1}\,+\,\Sigma_{2}\,+\,\Sigma_{3},
    \label{eq:sigma def}
\end{align}
which are given by
\begin{subequations}
\begin{align}
    \Sigma_{1}  &:=\,
    \tfrac{1}{d}\sum\limits_{m,n}\bra{mn}\rho\ket{mn}=\tfrac{1}{d},\\[1mm]
    \Sigma_{2}  &:=\,\tfrac{1}{d}\sum\limits_{m\neq n}\bra{mm}\rho\ket{nn},\\[1mm]
    \Sigma_{3}  &:=
    \tfrac{1}{d^{2}}\hspace*{-4mm}
    \sum\limits_{\substack{m\neq m\pri, m\neq n \\ n\neq n\pri, n\pri\neq m\pri}}
    \hspace*{-4mm}
    \operatorname{Re}\Bigl(c_{mnm\pri n\pri}
    \bra{m\pri n\pri}\rho\ket{mn}\Bigr),
\end{align}
\end{subequations}
where we have introduced the quantity
\begin{align}\label{eq:cWithStandard}
    c_{mnm\pri n\pri} &:=\sum_{j}\omega^{j(m-m\pri-n+n\pri)+k(m^{2}-m^{\prime 2}-n^{2}+n^{\prime 2})}.
\end{align}
We can then bound the real part appearing in the summands of $\Sigma_{3}$ by their modulus, i.e.,
\begin{align}
    &\operatorname{Re}\Bigl(c_{mnm\pri n\pri}
    \bra{m\pri n\pri}\rho\ket{mn}\Bigr) \,\leq\, |c_{mnm\pri n\pri}
    \bra{m\pri n\pri}\rho\ket{mn}|
    \,=\,|c_{mnm\pri n\pri}|\cdot|\!\bra{m\pri n\pri}\rho\ket{mn}\!|\,
    \label{eq:bound Re c times matrix element}
\end{align}
and use the Cauchy-Schwarz inequality and the spectral decomposition $\rho=\sum_{i}p_{i}\ket{\psi_{i}}\!\!\bra{\psi_{i}}$ such that
\begin{align}
    |\!\bra{m\pri n\pri}\rho\ket{mn}\!|
    &=|\sum\limits_{i}
    \sqrt{p_{i}}\scpr{m\pri n\pri}{\psi_{i}}
    \sqrt{p_{i}}\scpr{\psi_{i}}{mn}|\\[1mm]
    &
    \leq
    \sqrt{\sum\limits_{i}p_{i}\scpr{m\pri n\pri}{\psi_{i}}\scpr{\psi_{i}}{m\pri n\pri}}
    \sqrt{\sum\limits_{i}p_{i}\scpr{mn}{\psi_{i}}\scpr{\psi_{i}}{mn}}
    =\sqrt{\bra{m\pri n\pri}\rho\ket{m\pri n\pri}\bra{mn}\rho\ket{mn}}.\nonumber
\end{align}
Finally, one notes that for any single basis choice (labelled by $k$), one has
\begin{align}
    |c_{mnm\pri n\pri}|   &=\,|\sum_{j}\omega^{j(m-m\pri-n+n\pri)}|,
\end{align}
which vanishes whenever $(m-m\pri-n+n\pri)\!\!\mod(d)\neq0$, and is equal to $d$ otherwise, resulting in $\tilde{\gamma}_{mm\pri nn\pri}$ as in Eq.~(\ref{eq:tilde gamma appendix}).


\subsection{Changing the designation of `standard basis'}\label{sec:some name}

The fidelity bound we have derived is based on designating one of the measured bases as the `standard basis', while the other is a WF basis constructed w.r.t. the standard basis. However, the bases are mutually unbiased and because the target state is maximally entangled, we can also reverse their roles to obtain another fidelity bound that can differ in its value from the original bound, depending on the estimated matrix elements (i.e., depending on the experimental data). Moreover, if one has measured in several of the WF bases, any of them can be designated the `standard basis' and one may combine it either with the original standard basis, or with any of the other WF bases for which one has taken data.

\subsubsection{Exchanging standard basis and WF basis}\label{sec:this is not the best name of a subsection, this is just a tribute}

To see which modifications this entails, let us first spell out the specific relationship between these bases. Suppose we designate the WF basis $\{\ket{\tilde{n}_{k}}\}_{n}$ labelled by $k$ as the new `standard basis'. Then we can express the vectors of the original standard basis $\{\ket{m}\}_{m}$ as
\begin{align}
    \ket{m} &=\,\sum\limits_{n}\scpr{\tilde{n}_{k}}{m}\,\ket{\tilde{n}_{k}}\,=\,\tfrac{1}{\sqrt{d}}\sum\limits_{n}\omega^{-n\nr m-k\nr m^{2}}\,\ket{\tilde{n}_{k}}
    \,=\,\tfrac{1}{\sqrt{d}}\sum\limits_{n}\,c_{mn}\suptiny{0}{0}{(k,\mathrm{st.})}\,\ket{\tilde{n}_{k}},
\end{align}
where $c_{mn}\suptiny{0}{0}{(k,\mathrm{st.})}=\omega^{-n\nr m-k\nr m^{2}}$. Similarly, we can choose a decomposition into the conjugated basis vectors, i.e.,
\begin{align}
    \ket{m} &=\,\sum\limits_{n}\scpr{\tilde{n}_{k}^{*}}{m}\,\ket{\tilde{n}_{k}^{*}}\,=\,\tfrac{1}{\sqrt{d}}\sum\limits_{n}\,c_{mn}\suptiny{0}{0}{(k,\mathrm{st.})\nr *}\,\ket{\tilde{n}_{k}^{*}}.
\end{align}
We can then write the analogous expression to $\Sigma$ in Eq.~(\ref{eq:sigma def}) as
\begin{align}
    &\Sigma\suptiny{0}{0}{(k,\mathrm{st.})}\,=\,\sum\limits_{m}\bra{mm}\rho\ket{mm}
    \,=\,\tfrac{1}{d^{2}}\sum\limits_{\substack{ n,n\pri\\ q,q\pri}}\sum\limits_{m}
    c_{mn}\suptiny{0}{0}{(k,\mathrm{st.})\nr *}c_{mn\pri}\suptiny{0}{0}{(k,\mathrm{st.})}
    c_{mq}\suptiny{0}{0}{(k,\mathrm{st.})}c_{mq\pri}\suptiny{0}{0}{(k,\mathrm{st.})\nr *}
    \bra{\tilde{n}_{k}\tilde{n}_{k}^{\prime\nr *}}\rho\ket{\tilde{q}_{k}\tilde{q}_{k}^{\prime\nr *}},
    \label{eq:sigma def with mub}
\end{align}
which we can also split up into three terms, i.e.,
\begin{align}
    \Sigma\suptiny{0}{0}{(k,\mathrm{st.})}
    &=\Sigma\suptiny{0}{0}{(k,\mathrm{st.})}_{1}
    \,+\,\Sigma\suptiny{0}{0}{(k,\mathrm{st.})}_{2}
    \,+\,\Sigma\suptiny{0}{0}{(k,\mathrm{st.})}_{3},
\end{align}
where
\begin{subequations}
\begin{align}
    \Sigma\suptiny{0}{0}{(k,\mathrm{st.})}_{1}  &:=\,
    \tfrac{1}{d}\sum\limits_{m,n}\bra{\tilde{m}_{k}\tilde{n}_{k}^{*}}\rho\ket{\tilde{m}_{k}\tilde{n}_{k}^{*}}\,=\,\tfrac{1}{d},\\[1mm]
    \Sigma\suptiny{0}{0}{(k,\mathrm{st.})}_{2}  &:=\,\tfrac{1}{d}\sum\limits_{m\neq n}\bra{\tilde{m}_{k}\tilde{m}_{k}^{*}}\rho\ket{\tilde{n}_{k}\tilde{n}_{k}^{*}},\\[1mm]
    \Sigma\suptiny{0}{0}{(k,\mathrm{st.})}_{3}  &:=
    \tfrac{1}{d^{2}}\hspace*{-4mm}
    \sum\limits_{\substack{m\neq m\pri, m\neq n \\ n\neq n\pri, n\pri\neq m\pri}}\hspace*{-4mm}
    c_{mnm\pri n\pri}\suptiny{0}{0}{(k,\mathrm{st.})}
    \bra{\tilde{m}_{k}\pri \tilde{n}_{k}^{\prime\nr *}}\rho\ket{\tilde{m}_{k}\tilde{n}_{k}^{*}},
\end{align}
\end{subequations}
while all other contributions to the sum in Eq.~(\ref{eq:sigma def with mub}) vanish. The coefficient $c_{mnm\pri n\pri}\suptiny{0}{0}{(k,\mathrm{st.})}$  appearing in $\Sigma\suptiny{0}{0}{(k,\mathrm{st.})}_{3}$ is given by
\begin{align}\label{cKStd}
    c_{mnm\pri n\pri}\suptiny{0}{0}{(k,\mathrm{st.})} &=\,
    \sum\limits_{j}
    c_{jm\pri}\suptiny{0}{0}{(k,\mathrm{st.})\nr *}c_{jn\pri}\suptiny{0}{0}{(k,\mathrm{st.})}
    c_{jm}\suptiny{0}{0}{(k,\mathrm{st.})}c_{jn}\suptiny{0}{0}{(k,\mathrm{st.})\nr *}
    \,=\,\sum\limits_{j}\omega^{-j(m-m\pri-n+n\pri)},
\end{align}
and since $|c_{mnm\pri n\pri}\suptiny{0}{0}{(k,\mathrm{st.})}|=|c_{mnm\pri n\pri}|$, the coefficient $\tilde{\gamma}_{mm\pri nn\pri}$ in Eq.~(\ref{eq:tilde gamma appendix}) is unaffected. We can thus simply exchange the role of the standard basis and any one of the WF bases to obtain a fidelity bound of the form
\begin{align}
    F(\rho,\Phi)    &\geq\,
    \tfrac{1}{d}\sum_{m}\bra{\tilde{m}_{k}\tilde{m}_{k}^{*}}\rho\ket{\tilde{m}_{k}\tilde{m}_{k}^{*}}\,+\,
    \sum\limits_{m}\bra{mm}\rho\ket{mm}-\tfrac{1}{d}
    \nonumber\\[1mm]
    &\ \ -\hspace*{-4.5mm}
    \sum\limits_{\substack{m\neq m\pri\!,  m\neq n \\ n\neq n\pri\!, n\pri\neq m\pri\\}}\hspace*{-4.5mm}
    \tilde{\gamma}_{mm\pri nn\pri}\,
    \sqrt{
     \bra{\tilde{m}_{k}\pri \tilde{n}_{k}^{\prime\nr *}}\rho\ket{\tilde{m}_{k}\pri \tilde{n}_{k}^{\prime\nr *}}
     \bra{\tilde{m}_{k}\tilde{n}_{k}^{*}}\rho\ket{\tilde{m}_{k}\tilde{n}_{k}^{*}}
    }\,.\label{eq:fid bound WF plus standard}
\end{align}


\subsubsection{Using only two WF bases}

Now, let us consider what happens when we designate one of the WF bases (labelled by $k$) as the new `standard basis' and use a second WF basis, labelled by $k\pri$ as a mutually unbiased basis to construct our fidelity bound. As before, we express these bases w.r.t. to each other as
\begin{align}
    \ket{\tilde{m}_{k\pri}} &=\,\sum\limits_{n}\scpr{\tilde{n}_{k}}{\tilde{m}_{k\pri}}\,\ket{\tilde{n}_{k}}
    \,=\,\tfrac{1}{d}\sum\limits_{n,p}
    \omega^{p(m-n)+p^{2}(k\pri-k)}\,\ket{\tilde{n}_{k}}
    \,=\,\tfrac{1}{\sqrt{d}}\sum\limits_{n}\,c_{mn}\suptiny{0}{0}{(k,k\pri)}\,\ket{\tilde{n}_{k}},
\end{align}
where we have defined
\begin{align}\label{eq:twoWFbases}
    c_{mn}\suptiny{0}{0}{(k,k\pri)}  &:=
    \tfrac{1}{\sqrt{d}}\sum\limits_{p}
    \omega^{p(m-n)+p^{2}(k\pri-k)}.
\end{align}

In order to continue, let us recall the formula for quadratic Gauss sums for odd roots of unity.
\begin{align}\label{eq:QGaussSum}
g(a:d) = \sum_{i=0}^{d-1}\omega^{ai^2} = \left(\frac{a}{d}\right)\varepsilon_d\sqrt{d},
\end{align}
where $d$ is odd,  $\left(\frac{a}{d}\right)\in\{-1,0,1\}$ is the Jacobi symbol  and 
\begin{align}
\varepsilon_d = \begin{cases}
1 & \text{if } d \equiv 1 \mod 4\\
i & \text{if } d \equiv 3 \mod 4.
\end{cases}
\end{align}

This itself is not of the same form as~\eqref{eq:twoWFbases}, but is a main ingredient in its evaluation.

In fact~\eqref{eq:twoWFbases} is of a more general form, which is often denoted as generalized quadratic Gauss sum:
\begin{align}
    G(a,b,c) &= \sum\limits_{n=0}^{c-1} e^{2\pi i \frac{an^2+bn}{c}}.
\end{align}
In our special case, evaluation of such sums is given by the following lemma.

\begin{lemma} 
Let $c$ be odd and $\gcd(a,c) = 1$ the solution takes this form:
\begin{align}
G(a,b,c) = \varepsilon_c \sqrt{c} \left(\frac{a}{c}\right)e^{-2\pi i\frac{\psi(a)b^2}{c}},
\end{align}
where
\begin{align}
\varepsilon_c = \begin{cases}
1 & \text{if } c \equiv 1 \mod 4\\
i & \text{if } c \equiv 3 \mod 4.
\end{cases}
\end{align}
$\left(\frac{a}{c}\right) \in \{1,0,-1\}$ is the Jacobi symbol (in our case $1$ or $-1$, because $a$ and $c$ are coprime) and $\psi(a)$ is a number such that $4\psi(a)a \equiv 1 \mod c$.
\end{lemma}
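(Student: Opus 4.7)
My approach is the classical ``complete the square modulo $c$'' reduction, which turns $G(a,b,c)$ into the ordinary quadratic Gauss sum $g(a:c)$ whose value is already given by Eq.~(\ref{eq:QGaussSum}). The very reason the hypothesis is $c$ odd with $\gcd(a,c)=1$ is precisely so that $\gcd(4a,c)=1$, which guarantees that the element $\psi(a)$ defined by $4\psi(a)a\equiv 1\pmod c$ not only exists but is unique modulo $c$; this has to be noted at the outset.

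First I would verify the congruence
\begin{equation}
    an^2 + bn \,\equiv\, a\bigl(n + 2\psi(a)\,b\bigr)^2 \,-\, \psi(a)\,b^2 \pmod{c},
\end{equation}
which follows from a direct expansion: the cross term generates $4a\psi(a)\,bn \equiv bn$, and the remaining piece is $\psi(a)b^2(4a\psi(a)-1)\equiv 0$. Because the phases in $G(a,b,c)$ depend on the exponent only modulo $c$, this substitution is legitimate inside the sum, and the $n$--independent contribution $-\psi(a)b^2$ factors out as the overall phase $e^{-2\pi i\psi(a)b^2/c}$ appearing in the claimed formula.

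Next I would perform the change of summation variable $m = n + 2\psi(a)b \pmod{c}$. Since $n$ ranges over a complete residue system modulo $c$, so does $m$, and the sum collapses to
\begin{equation}
    G(a,b,c) \,=\, e^{-2\pi i\psi(a)b^2/c}\sum_{m=0}^{c-1} e^{2\pi i a m^2/c} \,=\, e^{-2\pi i\psi(a)b^2/c}\,g(a:c).
\end{equation}
Substituting the evaluation $g(a:c)=\bigl(\tfrac{a}{c}\bigr)\varepsilon_c\sqrt{c}$ from Eq.~(\ref{eq:QGaussSum}) yields exactly the claimed closed form.

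The only genuinely subtle point — and hence the main obstacle — is making sure that ``completing the square'' really works in $\mathbb{Z}/c\mathbb{Z}$: one must invoke the oddness of $c$ and coprimality with $a$ to invert $4a$, and one must observe that while $\psi(a)$ itself is defined only modulo $c$, the expression $e^{-2\pi i\psi(a)b^2/c}$ is nevertheless well defined because shifting $\psi(a)$ by a multiple of $c$ shifts $\psi(a)b^2/c$ by an integer. Everything else — the index shift and the appeal to the ordinary Gauss sum — is standard once this algebraic identity is in place.
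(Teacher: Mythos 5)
Your proof is correct and follows essentially the same route as the paper: completing the square modulo $c$ (your shift $m=n+2\psi(a)b$ is exactly the paper's $n-h$ with $h=-\tfrac{b}{2a}=-2\psi(a)b$, since $(2a)^{-1}\equiv 2\psi(a)$), reindexing over a complete residue system, and invoking the ordinary quadratic Gauss sum evaluation. Your explicit remarks on the existence and uniqueness of $\psi(a)$ and on the well-definedness of $e^{-2\pi i\psi(a)b^2/c}$ are worthwhile additions that the paper leaves implicit.
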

\begin{proof}
Since $c$ is odd and $\gcd(a,c) = 1$, numbers $2,4$ and $a$ are all invertible modulo $c$. Therefore we can ``complete the square'' by rewriting $an^2 + bn$ as 
$a(n-h)^2 + k$, where $h = -\frac{b}{2a}$ and $k = -\frac{b^2}{4a}$ to get
\begin{align}
    G(a,b,c) &= \sum\limits_{n=0}^{c-1} e^{2\pi i\frac{a(n-h)^2 + k}{c}}
    \,=\, e^{2\pi i\frac{k}{c}}\sum\limits_{n=0}^{c-1} e^{2\pi i\frac{a(n-h)^2}{c}}
    \,=\, e^{-2\pi i\frac{\psi(a)b^2}{c}}\sum\limits_{n=0}^{c-1} e^{2\pi i\frac{an^2}{c}}
    \,=\,e^{-2\pi i\frac{\psi(a)b^2}{c}}g(a:c),
\end{align}
where the second to last equality follows from the fact that both sums iterate over the same set $\mathbb{Z}_c := \{0,\dots,c-1\}$.
\end{proof}

In our case we can set $a = k\pri-k$, $b=(m-n)$, and $c = d$, where $d$ is an odd prime and $\gcd(k\pri - k,d) = 1$ (as both $k\neq k'$ and $k,k\pri\in\{0,\dots,d-1\}$). Consequently, all conditions are met and we have
\begin{align}
    c_{mn}\suptiny{0}{0}{(k,k\pri)}  &:=
    \tfrac{1}{\sqrt{d}}\sum\limits_{p}
    \omega^{p(m-n)+p^{2}(k\pri-k)}\label{eq:GaussSumExpression}
    \,=\, \varepsilon_d \left(\frac{k\pri-k}{d}\right)\omega^{-\psi(k'-k)(m-n)^2}.
\end{align}
We also need the relation between the two conjugated bases, i.e.,
\begin{align}
    \ket{\tilde{m}_{k\pri}^{*}} &=\,\sum\limits_{n}\scpr{\tilde{n}_{k}^{*}}{\tilde{m}_{k\pri}^{*}}\,\ket{\tilde{n}_{k}^{*}}
    \,=\,\tfrac{1}{\sqrt{d}}\sum\limits_{n}\,c_{mn}\suptiny{0}{0}{(k,k\pri)\nr *}\,\ket{\tilde{n}_{k}^{*}}.
\end{align}
We can then follow the exact same steps as in Sec.~\ref{sec:this is not the best name of a subsection, this is just a tribute}: We first form the sum over correlated matrix elements in the second basis,
\begin{align}
    &\Sigma\suptiny{0}{0}{(k,k\pri)}\,=\,\sum\limits_{m}\bra{\tilde{m}_{k\pri}\tilde{m}_{k\pri}}\rho\ket{\tilde{m}_{k\pri}\tilde{m}_{k\pri}}
    \,=\,\tfrac{1}{d^{2}}\!\!\sum\limits_{\substack{ n,n\pri\\ q,q\pri}}\sum\limits_{m}
    c_{mn}\suptiny{0}{0}{(k,k\pri)\nr *}c_{mn\pri}\suptiny{0}{0}{(k,k\pri)}
    c_{mq}\suptiny{0}{0}{(k,k\pri)}c_{mq\pri}\suptiny{0}{0}{(k,k\pri)\nr *}
    \bra{\tilde{n}_{k}\tilde{n}_{k}^{\prime\nr *}}\rho\ket{\tilde{q}_{k}\tilde{q}_{k}^{\prime\nr *}},
    \label{eq:construction standard as mub}
\end{align}
which can again be split up into three terms, i.e.,
\begin{align}
    \Sigma\suptiny{0}{0}{(k,k\pri)}
    &=\Sigma\suptiny{0}{0}{(k,k\pri)}_{1}
    \,+\,\Sigma\suptiny{0}{0}{(k,k\pri)}_{2}
    \,+\,\Sigma\suptiny{0}{0}{(k,k\pri)}_{3},
\end{align}
where
\begin{subequations}
\begin{align}
    \Sigma\suptiny{0}{0}{(k,k\pri)}_{1}  &:=\,
    \tfrac{1}{d}\sum\limits_{m,n}\bra{\tilde{m}_{k}\tilde{n}_{k}^{*}}\rho\ket{\tilde{m}_{k}\tilde{n}_{k}^{*}}\,=\,\tfrac{1}{d},\\[1mm]
    \Sigma\suptiny{0}{0}{(k,k\pri)}_{2}  &:=\,\tfrac{1}{d}\sum\limits_{m\neq n}\bra{\tilde{m}_{k}\tilde{m}_{k}^{*}}\rho\ket{\tilde{n}_{k}\tilde{n}_{k}^{*}},\\[1mm]
    \Sigma\suptiny{0}{0}{(k,k\pri)}_{3}  &:=
    \tfrac{1}{d^{2}}
    \hspace*{-4mm}
    \sum\limits_{\substack{m\neq m\pri, m\neq n \\ n\neq n\pri, n\pri\neq m\pri}}\hspace*{-4mm}
    c_{mnm\pri n\pri}\suptiny{0}{0}{(k,k\pri)}
    \bra{\tilde{m}_{k}\pri \tilde{n}_{k}^{\prime\nr *}}\rho\ket{\tilde{m}_{k}\tilde{n}_{k}^{*}},
\end{align}
\end{subequations}
and one can once more confirm that all other terms vanish. The main difference now lies in the form of the coefficient $c_{mnm\pri n\pri}\suptiny{0}{0}{(k,k\pri)}$, 
which, with help of Eq.~\eqref{eq:GaussSumExpression} can be written as
\begin{align}
  c_{mnm\pri n\pri}\suptiny{0}{0}{(k,k\pri)}
  &=\,     
  \sum\limits_{j}
    c_{jm\pri}\suptiny{0}{0}{(k,k\pri)\nr *}c_{jn\pri}\suptiny{0}{0}{(k,k\pri)}
    c_{jm}\suptiny{0}{0}{(k,k\pri)}c_{jn}\suptiny{0}{0}{(k,k\pri)\nr *}
    \,=\,   (\varepsilon_d^*)^2\varepsilon_d^2\left(\frac{k\pri-k}{d}\right)^4 
    \sum\limits_{j}
  \omega^{\psi(k'-k)[(j-m\pri)^2{-(j-n\pri)^2} {-(j-m)^2} +{(j-n)^2}]}\nonumber\\
  &=\omega^{\psi(k'-k)({m^{\pri2}-m^2-n^{\pri2}+n^2})} \sum\limits_{j}
  \omega^{\psi(k'-k)2j(m-m\pri  - n+n\pri )}.
  \label{eq:cWithk2}
\end{align}
It remains to be noted that $\vert  c_{mnm\pri n\pri}\suptiny{0}{0}{(k,k\pri)} \vert =  \vert c_{mnm\pri n\pri}\vert$. We can hence write the fidelity bound in the usual form
\begin{align}
    F(\rho,\Phi)    &\geq\,
    \tfrac{1}{d}\sum_{m}\bra{\tilde{m}_{k}\tilde{m}_{k}^{*}}\rho\ket{\tilde{m}_{k}\tilde{m}_{k}^{*}}\,
    +\,
    \sum\limits_{m}\bra{\tilde{m}_{k\pri}\tilde{m}_{k\pri}^{*}}\rho\ket{\tilde{m}_{k\pri}\tilde{m}_{k\pri}^{*}}
    -\,\tfrac{1}{d}\nonumber\\[1mm]
    &\ \ -\hspace*{-4.5mm}
    \sum\limits_{\substack{m\neq m\pri\!,  m\neq n \\ n\neq n\pri\!, n\pri\neq m\pri\\}}\hspace*{-4.5mm}
    \tilde{\gamma}_{mnm\pri n\pri}\,
    \sqrt{
     \bra{\tilde{m}_{k}\pri \tilde{n}_{k}^{\prime\nr *}}\rho\ket{\tilde{m}_{k}\pri \tilde{n}_{k}^{\prime\nr *}}
     \bra{\tilde{m}_{k}\tilde{n}_{k}^{*}}\rho\ket{\tilde{m}_{k}\tilde{n}_{k}^{*}}
    }\,.\label{eq:fid bound two WF bases}
\end{align}
For any two chosen bases, the bound can hence be evaluated in a straightforward way.


\subsection{Entanglement of Formation}\label{sec:entanglement of formation}

In this section, we discuss how we bound the entanglement of formation $(E_{\mathrm{oF}})$ of our bipartite state from the measurement data for any two MUBs. Here, the $E_{\mathrm{oF}}$ quantifies how many Bell states are required to convert to a single copy of our high-dimensional entangled state via local operations and classical communication (LOCC)~\cite{Wootters:2001tn}.

Using the uncertainty relations reviewed in Ref.~\cite{Coles:2017}, the bound on the $E_{\mathrm{oF}}$ is given by
\begin{equation}
    E_{\mathrm{oF}} \geq \log_2(d) - \mathcal{H}(A_1|B_1) - \mathcal{H}(A_2|B_2), \label{eq:eoff}
\end{equation}{}
\begin{figure*}[hb!]
\centering\includegraphics[width=0.9\textwidth]{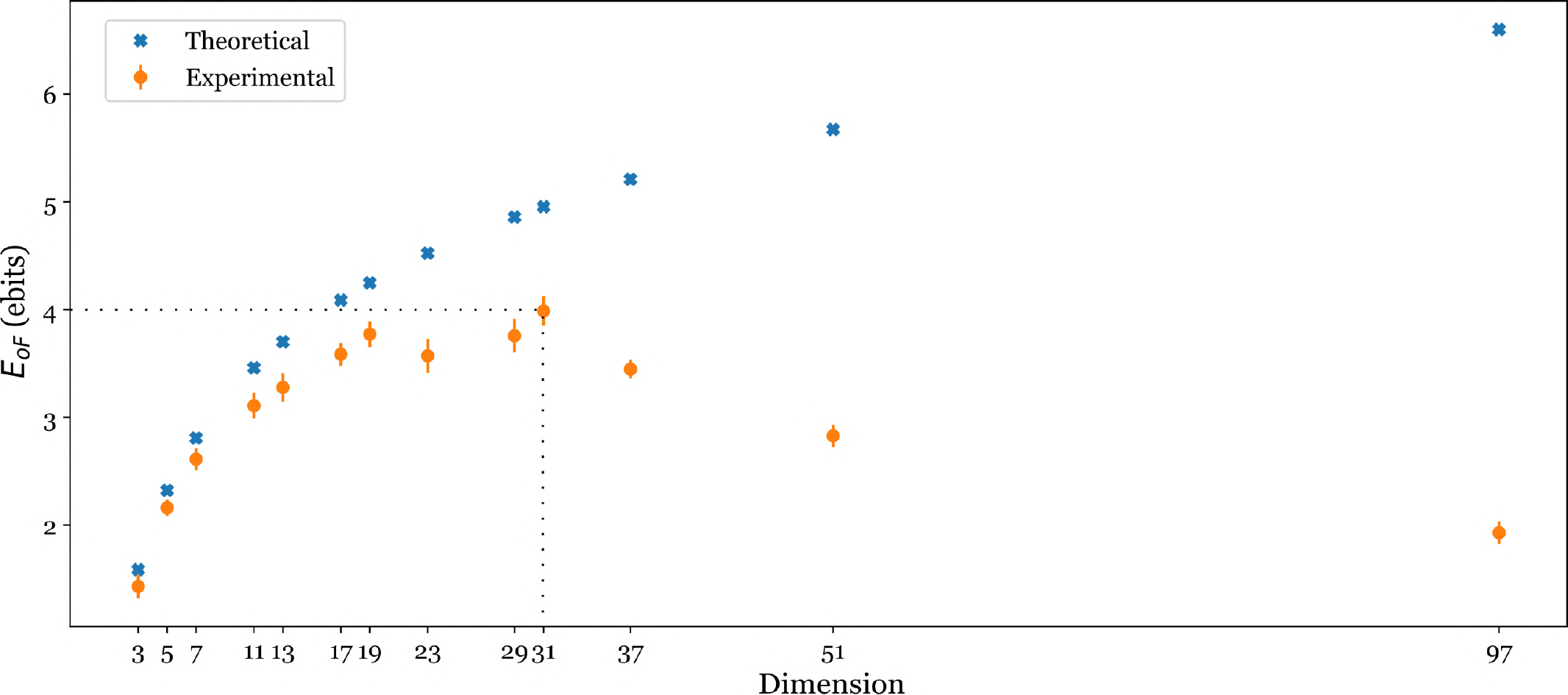}
\caption{\textbf{Entanglement of formation $(E_{\mathrm{oF}})$ bounds}. Using measurements in two mutually unbiased bases, bounds for the entanglement of formation are determined through Eq.~\eqref{eq:eoff} for dimensions up to $d=97$ (orange points). Error bars are calculated assuming Poissonian counting statistics and error propagation via a Monte-Carlo simulation of the experiment. The theoretical upper bound given by $log_2(d)$ is depicted by blue crosses. The maximum value achieved in our experiment ($E_{\mathrm{oF}} = 4.0 \pm 0.1$ ebits for $d=31$) is indicated by dotted lines.}
\label{fig:eof}
\end{figure*}
\noindent where $\mathcal{H}(A_i|B_i)$ is the conditional Shannon entropy for the $i$-th mutually unbiased basis (MUB), i.e.,
\begin{align}
    \mathcal{H}(A_i|B_i) &= \mathcal{H}(\{\rho^{(i)}_{jk}\}) - \mathcal{H}(\{\rho^{(i)}_{j}\}), \quad \text{for }  i = 1, 2, \label{eq:eof1}
\end{align}{}
with
\begin{align}
    \rho^{(i)}_{jk} &= \brakket{jk}{\rho}{jk}_i, \quad \rho^{(i)}_j  = \sum_k \brakket{jk}{\rho}{jk}_i.
\end{align}{}

\noindent Since we know the terms $\{\rho^{(i)}_{jk}\}$ are related to coincidences measured in $i$-th MUB through Eq.~\eqref{eq:clicks vs matrix elements}, the expression in~\eqref{eq:eoff} can be evaluated for any pair of MUB measurements. Figure~\ref{fig:eof} shows the $E_{\mathrm{oF}}$ achieved in our experiment as a function of dimension, compared with the theoretical maximum of $log_2(d)$. As can be seen, the $E_{\mathrm{oF}}$ increases until approximately $d=31$, where it reaches a maximum value of $4.0\pm 0.1$ ebits. As the dimension is increased further, experimental imperfections and the inherent dimensionality limits of the source bring this number back down.



\end{document}